\makeatletter \hypersetup{pdftitle={\@title}}}
	\gdef\xxxmark{%
		\expandafter\ifx\csname @mpargs\endcsname\relax 
		\expandafter\ifx\csname @captype\endcsname\relax 
		\marginpar{xxx}
		\else
		xxx 
		\fi
		\else
		xxx 
		\fi}
	\gdef\xxx{\@ifnextchar[\xxx@lab\xxx@nolab}
	\long\gdef\xxx@lab[#1]#2{\textbf{[\xxxmark #2 ---{\sc #1}]}}
	\long\gdef\xxx@nolab#1{\textbf{[\xxxmark #1]}}
\gdef\fps@figure{!htbp}}
\let\realbfseries=\bfseries
\def\bfseries{\realbfseries\boldmath}
\newtheorem{theorem}{Theorem}[section]
\newtheorem{lemma}[theorem]{Lemma}
\newtheorem{corollary}[theorem]{Corollary}
\theoremstyle{definition}
\let\epsilon=\varepsilon
\def\defn#1{\textbf{\textit{\boldmath #1}}}
\newcommand\manifold[1]{\mathcal{#1}}
\newcommand\graph[1]{\mathbf{#1}}
\newcommand\overlap{\cap}
\begin{document}

\title{All Polyhedral Manifolds are Connected by a 2-Step Refolding}

\author{
	Lily Chung\thanks{Computer Science and Artificial Intelligence Laboratory, Massachusetts Institute of Technology, USA.} 
	\and
	Erik D.~Demaine\footnotemark[1]
	\and
	Jenny Diomidova\footnotemark[1]
	\and
	Tonan Kamata\thanks{School of Information and Science, Japan
		Advanced Institute of Science and Technology, Japan.} 
	\and
	Jayson Lynch\footnotemark[1]
	\and
	Ryuhei Uehara\footnotemark[2]
	\and
	Hanyu Alice Zhang\thanks{School of Applied and Engineering Physics, Cornell University, USA.} 
}

\date{}

\maketitle

\begin{abstract}

We prove that, for any two polyhedral manifolds $\manifold P,\manifold Q$,
there is a polyhedral manifold $\manifold I$ such that
$\manifold P,\manifold I$ share a common unfolding and
$\manifold I,\manifold Q$ share a common unfolding.
In other words, we can unfold $\manifold P$,
refold (glue) that unfolding into $\manifold I$,
unfold $\manifold I$, and then refold into~$\manifold Q$.
Furthermore, if $\manifold P,\manifold Q$ have no boundary
and can be embedded in 3D (without self-intersection),
then so does~$\manifold I$.
These results generalize to $n$ given manifolds
$\manifold P_1,\manifold P_2, \dots, \manifold P_n$;
they all have a common unfolding with the same intermediate manifold~$\manifold I$.
Allowing more than two unfold/refold steps,
we obtain stronger results for two special cases:
for doubly covered convex planar polygons,
we achieve that all intermediate polyhedra are planar;
and for tree-shaped polycubes,
we achieve that all intermediate polyhedra are tree-shaped polycubes.

\end{abstract}

\section{Introduction}
\label{sec:intro}

Consider a \defn{polyhedral manifold} ---
a connected two-dimensional surface made from flat polygons
by gluing together paired portions of boundary
(but possibly still leaving some boundary unpaired,
and not necessarily embedded in space without overlap).
Two basic operations on such a manifold are \defn{gluing}
(joining together two equal-length portions of remaining boundary)
and the inverse operation \defn{cutting}
(splitting a curve into two equal-length portions of boundary,
while preserving overall connectivity of the manifold).
If we cut a manifold $\manifold P$ enough that it can be laid isometrically
into the plane (possibly with overlap), we call the resulting flat shape $U$
an \defn{unfolding} of~$\manifold P$.%
\footnote{Note that our notion of ``unfolding'' differs from many other uses,
  such as \cite{Demaine-O'Rourke-2007}, which forbid overlap.
  But it matches most previous work on refolding
  \cite{Refolding_CGTA,CommonUnfolding_CCCG2022}.}
Conversely, if we glue a flat shape $U$ into any polyhedral
manifold~$\manifold P$, we call $\manifold P$ a \defn{folding} of $U$
(and $U$ an unfolding of~$\manifold P$).

(Un)foldings naturally define an infinite bipartite graph $\graph G$
\cite[Section 25.8.3]{Demaine-O'Rourke-2007}:
define a vertex on one side for each manifold~$\manifold P$,
a vertex on the other side for each flat shape~$U$,
and an edge between $U$ and $\manifold P$ whenever
$U$ is an unfolding of~$\manifold P$
(or equivalently, $\manifold P$ is a folding of~$U$).
Because unfolding and folding preserve surface area,
we can naturally restrict the graph to manifolds and flat shapes of
a fixed surface area~$A$.
Is the resulting graph $\graph G_A$ connected?
In other words, is it possible to transform any polyhedral manifold
into any other polyhedral manifold of the same surface area
by an alternating sequence of unfolding to a flat shape,
folding that flat shape into a new manifold,
unfolding that manifold into a flat shape, and so on?
We call each pair of steps --- unfolding and then folding --- a
\defn{refolding step}.
We can then ask whether two manifolds have a $k$-step refolding
for each $k = 1, 2, \dots$.

In this paper, we give the first proof that the graph $\graph G_A$ is connected.
In fact, we show that the graph has diameter at most $2$:
every two polyhedral manifolds $\manifold P, \manifold Q$
have a $2$-step refolding.
In other words, there is a single polyhedral manifold $\manifold I$ such that
$\manifold P$ and $\manifold I$ share a common unfolding,
as do $\manifold I$ and $\manifold Q$.
This result turns out to follow relatively easily using classic results
from common dissection, similar in spirit to general algorithms for
hinged dissection \cite{HingedDissections_DCG}.

More interesting is that we show similar results when we restrict the
polyhedral manifolds to the following special cases, which sometimes reduce
the allowed input manifolds $\manifold P, \manifold Q$, but
importantly also reduce the allowed intermediate manifolds $\manifold I$:
\begin{enumerate}
\item \textbf{No boundary:}
  If polyhedral manifolds $\manifold P$ and $\manifold Q$ have no boundary
  (what we might call ``polyhedra''),
  then there is a 2-step refolding where the intermediate manifold
  $\manifold I$ also has no boundary.
  This version is similarly easy.
  (In fact, we can achieve this property even when $\manifold P$ and
  $\manifold Q$ have boundary.)
\item \textbf{Embedded:}
  If polyhedral manifolds $\manifold P$ and $\manifold Q$ are embedded in 3D
  and have no boundary,
  then there is a 2-step refolding where the intermediate manifold
  $\manifold I$ is embedded in 3D and has no boundary.
  This strengthening follows directly from a general theorem of
  Burago and Zalgaller \cite{Burago-Zalgaller-1996}.
  (In fact, we can achieve this property whenever $\manifold P$ and
  $\manifold Q$ are orientable.
  Or, if we allow $\manifold I$ to have boundary,
  we can always make it embeddable in 3D.)
\item \textbf{Doubly covered convex polygons:}
  If polyhedral manifolds $\manifold P$ and $\manifold Q$ are doubly covered
  convex polygons, then there is an $O(n)$-step refolding
  where every intermediate manifold $\manifold I$ is ``planar''
  (all polygons lie in the plane, but possibly with multiple layers)
  and has no boundary.
  This result follows from an $O(1)$-step refolding to remove a vertex from
  a doubly covered convex polygon.
\item \textbf{Polycubes:}
  If polyhedral manifolds $\manifold P$ and $\manifold Q$ are the surfaces
  of tree-shaped $n$-cubes (made from $n$ unit cubes joined face-to-face
  according to a tree dual), then there is an $O(n^2)$-step refolding
  where every intermediate manifold $\manifold I$ is a
  (possibly self-intersecting) tree-shaped $n$-cube.
  Furthermore, the refoldings involve cuts only along edges of the cubes
  (\defn{grid edges}); and if the given polycubes do not self-intersect and
  are ``slit-free'', then the intermediate polycubes also do not
  self-intersect.
  This result follows from simulating operations in reconfigurable robots.
\end{enumerate}

Past work on refolding has focused on the restriction to polyhedral manifolds
that are the surfaces of \emph{convex polyhedra}.
This version began with a specific still-open question ---
is there a 1-step refolding from a cube to a regular tetrahedron? ---
independently posed by M. Demaine (1998), F. Hurtado (2000), and E. Pegg (2000).
When E. Demaine and J. O'Rourke wrote this problem in their book
\cite[Open Problem 25.6]{Demaine-O'Rourke-2007},
they also introduced the multi-step refolding problem.
Let $\graph C_A$ be the subgraph of $\graph G_A$
restricting to convex polyhedra and their unfoldings.
Demaine, Demaine, Diomidova, Kamata, Uehara, and Zhang
\cite{Refolding_CGTA} showed that several convex polyhedra of surface area $A$
are all in the same connected component of~$\graph C_A$:
doubly covered triangles,
doubly covered regular polygons,
tetramonohedra (tetrahedra whose four faces are congruent acute triangles, including doubly covered rectangles),
regular prisms,
regular prismatoids,
augmented regular prismatoids,
and all five Platonic solids.
These refoldings require just $O(1)$ steps (at most~$9$).

Our 2-step refolding is very general,
applying in particular to any two convex polyhedra;
for example, Figure~\ref{fig:cube-tetrahedron} shows the
example of a cube to a regular tetrahedron.
But our refolding crucially relies on a
nonconvex intermediate manifold~$\manifold I$.
We conjecture that two steps is also optimal, even for two convex polyhedra.
Indeed, Arseneva, Demaine, Kamata, and Uehara \cite{CommonUnfolding_CCCG2022}
conjectured that most pairs of doubly covered triangles
(specifically, those with rationally independent angles)
have no common unfolding, and thus no 1-step refolding.
As evidence, they showed (by exhaustive search) that any common unfolding
has at least $300$ vertices.
Assuming this conjecture, two steps are sometimes necessary.
Two steps is also the first situation where we have an intermediate manifold
$\manifold I$, which is what allows us to exploit the additional freedom
of the nonconvexity of~$\manifold I$.

After presenting our general 2-step refolding (Section~\ref{sec:polyhedral-manifolds-transformation}),
we consider the special cases of doubly covered convex polygons (Section~\ref{sec:doubly-covered-shapes-transformation}) and polycubes (Section~\ref{sec:polycube-transformation}).

\section{Refolding Model}

%

In our constructions, we use a more general but equivalent form of
``refolding step'': any cutting followed by any gluing.
In other words, we allow using an arbitrary connected manifold
in between cutting and gluing.
By contrast, the definition in Section~\ref{sec:intro}
requires a full unfolding followed by a folding,
which requires a flat shape in between cutting and gluing.

These two models are equivalent.
If we want to modify one of our refolding steps to instead reach a full
unfolding after cutting, we can perform additional cuts (that preserve
connectivity) until the manifold can be laid flat,
and then immediately reglue those cuts back together.
The same idea is used in \cite{Refolding_CGTA}.

\section{Transformation Between Polyhedral Manifolds}
\label{sec:polyhedral-manifolds-transformation}

In this section, we prove the main result of the paper:

\begin{theorem} \label{thm:polyhedral-manifolds}
  For any $n$ polyhedral manifolds $\manifold P_1, \dots, \manifold P_n$
  of the same surface area, there is another polyhedral manifold $\manifold I$
  such that $\manifold P_i$ and $\manifold I$ have a common unfolding
  for all~$i$.
  We can guarantee that the intermediate manifold $\manifold I$ has no boundary,
  or guarantee that it embeds in 3D.
  If manifolds $\manifold P_i$ are all orientable,
  then we can guarantee that the intermediate manifold $\manifold I$
  is orientable, has no boundary, and embeds in 3D.
\end{theorem}

We initially focus on the case of $n=2$ manifolds:

\begin{corollary} \label{cor:polyhedral-manifolds}
  Any two polyhedral manifolds $\manifold P, \manifold Q$
  of the same surface area have a 2-step refolding.
  We can guarantee that the intermediate manifold $\manifold I$ has no boundary,
  or guarantee that it embeds in 3D.
  If manifolds $\manifold P$ and $\manifold Q$ are both orientable,
  then we can guarantee that the intermediate manifold $\manifold I$
  is orientable, has no boundary, and embeds in 3D.
\end{corollary}

Figure~\ref{fig:cube-tetrahedron} gives an example construction of an
intermediate manifold $\manifold I$ when $\manifold P$ is a cube and
$\manifold Q$ is a regular tetrahedron.
To improve figure clarity, this construction does not exactly follow
the general algorithm described below:
instead of the general dissection algorithm, we use an efficient 3-piece
dissection based on a dissection of Gavin Theobald \cite{latin-cross},
and we preserve more original gluings when resolving overlaps.
Nonetheless, it serves as a running example of the key steps in our algorithm.

\begin{figure}
  \centering
  \subcaptionbox{\label{fig:cube-tetrahedron-dissection}
    3-piece dissection of the cube into the regular tetrahedron,
    based on Gavin Theobald's 5-piece dissection of the Latin cross
    into the equilateral triangle \cite{latin-cross}. Red and blue dashed lines represent the folding to a cube and regular tetrahedron respectively.}
  {\includegraphics[page=1,width = \linewidth]{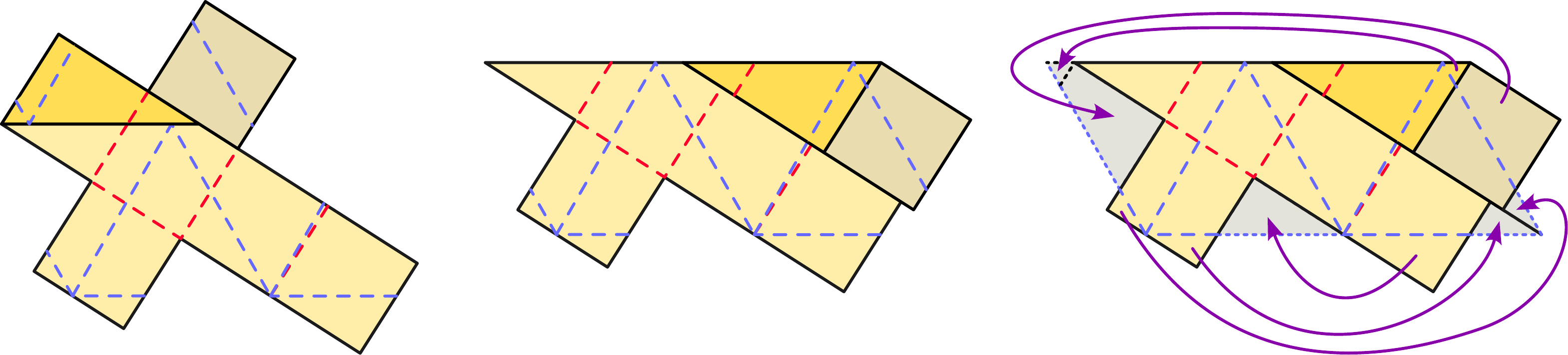}}

  \subcaptionbox{\label{fig:cube-tetrahedron-desired}
    Desired gluings for the cube (red) and the regular tetrahedron (blue).}
  {\includegraphics[page=2,width = 0.475\linewidth]{3piece_cube_tetra.pdf}}
  \hfill
  \subcaptionbox{\label{fig:cube-tetrahedron-final}
    Gluing for an intermediate manifold $\manifold I$,
    from subsets of the cube (red) and the tetrahedron (blue) gluing
    and zipping the remainder (green).}
  {\includegraphics[page=3,width = 0.475\linewidth]{3piece_cube_tetra.pdf}}

  \caption{Example 2-step refolding from the cube to the regular tetrahedron.}
  \label{fig:cube-tetrahedron}
\end{figure}

\subsection{Common Dissection}
\label{sec:common-dissection}

We start by computing a \defn{common dissection} of the given manifolds
$\manifold P$ and $\manifold Q$ of equal surface area, that is,
a subdivision of each surface into polygons that match in
the sense that, for some perfect pairing of $\manifold P$'s polygons
with $\manifold Q$'s polygons, there is an isometry between paired polygons.
Solutions to this dissection problem for \emph{polygons} $\manifold P$ and
$\manifold Q$ go back to the early 1800s
\cite{lowry1814solution,wallace1831elements,bolyai1832tentamen,gerwien1833zerschneidung}.
Their high-level approach is as follows:
\begin{enumerate}
\item Triangulate $\manifold P$ and $\manifold Q$.
\item Dissect each triangle (from both triangulations) into a rectangle.
  (This dissection needs only three pieces: cut the triangle
  parallel to its base at half the height, and cut from the apex
  orthogonal to the first cut.)
\item Dissect each rectangle into a rectangle of height $h_{\min}$,
  the smallest height among all the rectangles
  (i.e., half the smallest height among all triangles in both triangulations).
  (This dissection is more difficult and requires a pseudopolynomial number of pieces.)
\item Arrange all the rectangles from $\manifold P$ into one long rectangle
  of height $h_{\min}$, and similarly arrange all the rectangles from
  $\manifold Q$ into one long rectangle of height $h_{\min}$,
  necessarily the same rectangle~$R$.
\item Overlay the two dissections of this common rectangle $R$ and subdivide
  according to all cuts, producing a set of polygons
  that can form into $\manifold P$ and can form into~$\manifold Q$.
\end{enumerate}
See \cite{FeatureSize_EGC2011f,HingedDissections_DCG}
for more algorithmic descriptions,
including pseudopolynomial bounds on the number of pieces.

We can apply the same technique to the case where
$\manifold P$ and $\manifold Q$ are polyhedral manifolds instead of polygons.
The only slightly different step is triangulating the surfaces $\manifold P$
and $\manifold Q$ (Step~1), which we can do by e.g.\ triangulating the faces.
Then Steps 2--5 apply to the resulting triangles as usual.
Because the dissection construction does not require flipping the polygons,
the resulting common dissection is \defn{locally orientation preserving}:
the mapping from the polygons arranged to form $\manifold P$
to the polygons arranged to form $\manifold Q$
locally preserves which side of each polygon is ``up''.

The resulting dissection may not be ``edge-to-edge'':
when assembling the polygons together to form $\manifold P$ or $\manifold Q$,
two polygons may meet (intersect) at a segment that is only a subset of
an edge of either polygon.
Figure~\ref{fig:cube-tetrahedron-dissection} shows an example of such a
dissection, from a cube to a regular tetrahedron:
for example, in the cube (cross) arrangement, the triangular piece and
square piece share only a portion of their edges.
(This example was designed by hand, based on a dissection by Gavin Theobald
\cite{latin-cross}, not produced by the algorithm above.
In fact, Gavin Theobald found a 2-piece dissection from the cube to
the regular tetrahedron \cite{cube-tetra-2}.)

We can generalize this common dissection construction to
$n$ polyhedral manifolds $\manifold P_1, \dots, \manifold P_n$
of the same surface area
(as also mentioned in \cite{HingedDissections_DCG}):
just overlay $n$ dissections in Step~5.
Henceforth we will consider the case of general~$n$.

\subsection{Abstract Intermediate Manifold}
\label{sec:abstract-transformation}

Next we construct the intermediate manifold $\manifold I$.
For now, we will not worry about embeddability,
and just construct an (abstract) polyhedral manifold.

Consider two polygons $P_1, P_2$ in the common dissection
that are adjacent in manifold $\manifold P_i$
meaning that, when the polygons are assembled to form $\manifold P_i$,
there is an edge $e_1$ of $P_1$ and an edge $e_2$ of $P_2$
that overlap on a common positive-length segment.
Let $\overlap_i(e_1, e_2)$ denote the segment of $e_1$
that intersects $e_2$ when assembling $\manifold P_i$, and let
$\overlap_i(e_2, e_1)$ denote the corresponding segment of $e_2$.
(If the edges share more than one segment,
as they might in a non-edge-to-edge gluing,
pick one arbitrarily,
but consistently for $e_1$ and~$e_2$.)
Intuitively,
$\overlap_i(e_1, e_2) \leftrightarrow
\overlap_i(e_2, e_1)$
represent the gluings desired by $\manifold P_i$,
but the $\manifold P_i$ gluings likely conflict with the $\manifold P_j$
gluings for $i \neq j$.
Figure~\ref{fig:cube-tetrahedron-desired} gives an example.

Next we construct a (partial) gluing on the boundaries of the polygons
that includes a positive segment from every
$\overlap_i(e_1, e_2) \leftrightarrow
\overlap_i(e_2, e_1)$ gluing,
while avoiding conflicts.
The algorithm proceeds as follows:
\begin{enumerate}
\item For each manifold $\mathcal P_i$,
  and for every overlapping pair of edges $e_1$ of $P_1$ and
  $e_2$ of $P_2$ when assembling the polygons into $\mathcal P_i$,
  add $\overlap_i(e_1, e_2) \leftrightarrow \overlap_i(e_2, e_1)$
  to the list of gluings.
\item
  \label{step:rep:first}
  Find two gluings that overlap on an edge $e$, say
  $s_1 \leftrightarrow s_2$ and
  $s'_1 \leftrightarrow s'_2$
  where segments $s_1$ and $s'_1$ are subsegments of a common edge $e$
  that overlap on a positive-length segment $s_1 \cap s'_1$.
  Refer to Figure~\ref{fig:conflict resolve}.

\begin{figure}
  \centering
  \includegraphics[scale=0.2]{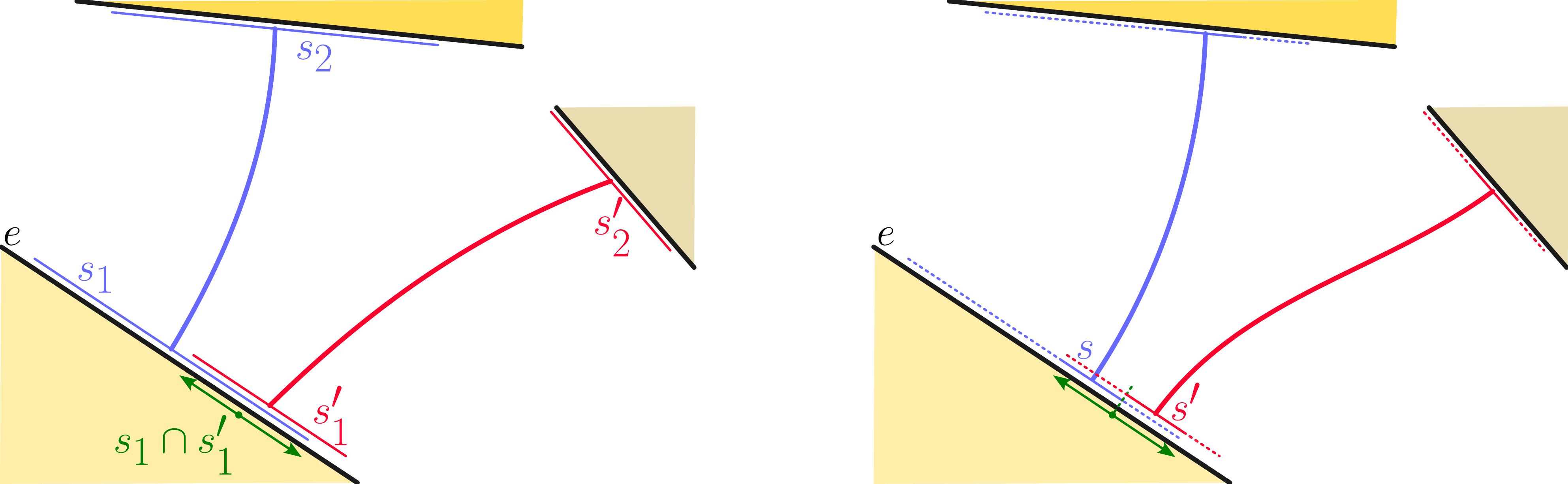}
  \caption{Removing overlap from two gluings $s_1 \leftrightarrow s_2$ and
    $s'_1 \leftrightarrow s'_2$ on edge $e$,
    by reducing $s_1$ and $s'_1$ to subsegments $s$ and $s'$
    which bisect $s_1 \cap s'_1$.}
  \label{fig:conflict resolve}
\end{figure}

\item
  Divide the segment $s_1 \cap s'_1$ into two equal halves,
  $s$ and~$s'$.
\item
  \label{step:rep:last}
  Remove the overlap between these two gluings by restricting
  $s_1 \leftrightarrow s_2$ to the subsegment $s \subset s_1$,
  and restricting
  $s'_1 \leftrightarrow s'_2$ to the subsegment $s' \subset s'_1$.
\item
  Repeat Steps~\ref{step:rep:first}--\ref{step:rep:last}
  until all overlaps have been removed.
\end{enumerate}
Because this algorithm only modifies gluings by restricting to a subsegment,
it never adds new overlaps,
so it will remove all overlaps after $O((n \, E^2)^2)$ repetitions,
where $E$ is the number of edges in the common dissection
so $O(E^2)$ bounds the number of gluings from each manifold $\mathcal P_i$.
Furthermore, every original gluing
$\overlap_i(e_1, e_2) \leftrightarrow \overlap_i(e_2, e_1)$
remains intact for some positive length.

Some of the boundary of the polygons may now be unglued.
If we want to avoid $\manifold I$ having boundary,
we can glue each segment $s$ of remaining boundary to itself,
by dividing it in two equal halves $s', s''$,
giving $s'$ and $s''$ opposite orientations,
and gluing $s' \leftrightarrow s''$.
(This type of gluing is called ``zipping'' \cite{Aleks_GC2002}.)
Figure~\ref{fig:cube-tetrahedron-final} gives an example of
a gluing that might result from an optimized form of this algorithm
(where we maintain as much of the original gluings as possible).

The gluing described above defines the intermediate manifold $\manifold I$.
Because manifold $\manifold I$ contains a portion of every desired gluing for
$\manifold P_i$, $\manifold I$ has a common unfolding with $\manifold P_i$:
just cut all gluings that did not originate from~$\manifold P_i$.
Because no cut fully separates an entire edge from its mate in $\manifold P_i$,
this cutting preserves connectivity; indeed, we obtain the same dual graph
of piece adjacencies as we do when arranging the dissection into~$\manifold P_i$.
Thus we obtain a 1-step refolding from $\manifold P_i$ to $\manifold I$,
and similarly from $\manifold I$ to $\manifold P_j$,
which gives a 2-step refolding from $\manifold P_i$ to $\manifold P_j$.

If manifolds $\manifold P_i$ are all orientable,
then so is the resulting intermediate manifold $\manifold I$,
because the common dissection is (locally) orientation preserving.

\subsection{Embeddable Intermediate Polyhedron via Burago--Zalgaller Theorem}
\label{sec:embeddable-transformation}

To guarantee that $\manifold I$ is embeddable in 3D,
we use a powerful result of Burago and Zalgaller \cite{Burago-Zalgaller-1996}.
See also \cite{O'Rourke-2010} for a detailed description of the result, and
\cite{Saucan-2012} for a description of the (quite complicated) construction.

\begin{theorem}[{\cite[Theorem~1.7]{Burago-Zalgaller-1996}}] \label{thm:B-Z}
  Every polyhedral manifold that is either orientable or has boundary
  admits an isometric piecewise-linear $C^0$ embedding into 3D.
\end{theorem}

To apply this theorem, we need that $\manifold I$ is either
orientable or has boundary.
As argued above, $\manifold I$ is orientable
if manifolds $\manifold P_i$ are all orientable.
Otherwise, we can give $\manifold I$ boundary by reducing any one gluing
to half of its length (or omitting the zips if we had some).
Either way, Theorem~\ref{thm:B-Z}
gives a subdivision of the polygons in $\manifold I$
into finitely many subpolygons, each of which gets isometrically embedded in 3D
by an embedding of~$\manifold I$.

%
%

\section{Transformation Between Doubly Covered Convex Polygons}
\label{sec:doubly-covered-shapes-transformation}

We start with \defn{doubly covered convex polygons}, that is,
polyhedral manifolds without boundary formed from two copies of a
convex planar polygon by gluing together all corresponding pairs of edges.
Here we require that every intermediate polyhedral manifold $\manifold I$
is \defn{planar} in the sense that its polygons all lie in the plane,
but we allow any number of layers of stacked polygons,
generalizing the notion of doubly covered polygon.

\begin{theorem}\label{thm:convex-polygon-refolding}
  Any two doubly covered convex $n$-gons of the same area
  have an $O(n)$-step refolding, where all intermediate manifolds are planar with no boundary.
\end{theorem}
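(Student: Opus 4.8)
The plan is to reduce each input to a small canonical shape and connect through it, using a single $O(1)$-step gadget that deletes one vertex as the basic building block. Concretely, I would first prove a \emph{vertex-removal lemma}: any doubly covered convex $m$-gon can be transformed, by an $O(1)$-step refolding whose intermediate manifolds are all planar and boundaryless, into a doubly covered convex $(m-1)$-gon of the same area. Granting this, I apply it repeatedly to drive the input $n$-gon $\manifold{P}$ down to a doubly covered triangle $T_{\manifold{P}}$, and likewise drive $\manifold{Q}$ down to a doubly covered triangle $T_{\manifold{Q}}$; each chain costs $O(n)$ steps. Since sharing a common unfolding is a symmetric relation, a refolding of $\manifold{Q}$ to $T_{\manifold{Q}}$ runs equally well backwards, so it suffices to additionally connect the two triangles. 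This yields the path $\manifold{P} \to T_{\manifold{P}} \to T_{\manifold{Q}} \to \manifold{Q}$, with every intermediate planar and boundaryless, in $O(n)$ total steps.

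For the vertex-removal lemma, fix three consecutive vertices $v_{i-1}, v_i, v_{i+1}$ and let $C$ be the corner triangle they span, with base $\seg{v_{i-1}v_{i+1}}$. Intrinsically, the doubly covered $m$-gon is a doubly covered $(m-1)$-gon --- the ``body'' on the remaining vertices --- with a doubly covered copy of $C$ attached along the doubled base. The idea is a purely local, flat operation in two phases. First, cut the gluings of $C$ and fold it flat against the body, producing a planar, boundaryless manifold equal to the doubly covered body carrying the cap's area as a few extra stacked layers along $\seg{v_{i-1}v_{i+1}}$. Second, redistribute those extra layers by a bounded scissors-congruence-style dissection, enlarging the body into a convex doubly covered $(m-1)$-gon whose area equals that of the original. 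Because both phases are local and flat, only $O(1)$ cuts and glues are used, every cut is reclosed so no boundary is ever exposed, and every intermediate admits a flat folded state in the plane.

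To connect the two triangles I would use the same flat-dissection principle as a base case: any two triangles of equal area are scissors congruent by an $O(1)$-piece dissection (e.g.\ each is flat-dissectible to a common rectangle via a median cut), and such a planar dissection lifts to an $O(1)$-step refolding of the corresponding doubly covered triangles in which each cut of both layers is immediately reglued, keeping all intermediates planar and boundaryless. If the vertex-removal gadget can be steered to output a prescribed $(m-1)$-gon, both chains can instead be aimed at one canonical triangle, absorbing this step entirely.

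The main obstacle is engineering the vertex-removal gadget so that all of the following hold simultaneously: the manifold stays \emph{connected} through every cut (so each operation is a legal refolding step), every exposed boundary is reglued so intermediates are \emph{boundaryless}, total area is \emph{conserved}, the output is genuinely \emph{convex} and doubly covered (so the lemma can be iterated), and every intermediate flat state is \emph{planar}. The tension is between locality --- needed to keep the step count $O(1)$ per removed vertex --- and the global bookkeeping forced by conserving area and restoring exact convexity as the cap's material is reabsorbed. I expect the bulk of the work to be specifying the dissection of the reabsorbed layers explicitly and verifying convexity and area of the resulting $(m-1)$-gon; once the gadget is nailed down, the reduction, the reversal via symmetry of common unfoldings, and the triangle base case are routine.
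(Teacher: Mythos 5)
Your high-level skeleton matches the paper's: a vertex-removal lemma applied $O(n)$ times to reduce both polygons to doubly covered triangles, plus a base case connecting equal-area triangles (the paper cites a known 3-step refolding for that, \cite[Theorem~2]{Refolding_CGTA}). But the proposal has a genuine gap exactly where you defer the work: the $O(1)$-step vertex-removal gadget. Your plan is to detach the corner triangle at an \emph{arbitrary} vertex and reabsorb it via a ``bounded scissors-congruence-style dissection,'' but no uniformly bounded dissection exists in general: the corner triangle $\triangle v_{i-1}v_iv_{i+1}$ can be arbitrarily skewed relative to its base (apex projecting far outside the base, or extreme aspect ratio versus the receiving region), and Bolyai--Gerwien-type dissections then require a number of pieces growing with the geometry, not $O(1)$. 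The paper's entire technical content is the resolution of exactly this issue: it removes not an arbitrary vertex but the vertex $B$ of \emph{largest} interior angle (hence $\ge 90\degree$ for $n \ge 4$, so the altitude foot lies inside $\seg{AC}$ and Lemma~\ref{lem:acute} applies), and it proves a geometric bound (Lemma~\ref{lem:convex-projection} and Corollary~\ref{cor:convex-projection2}) showing that sliding $B$ parallel to $AC$ until three vertices become collinear moves it a distance at most $|AC|$, which is what keeps the slanted case (Lemma~\ref{lem:scalene}, with its hypothesis $|CB_1| \le 4|AC|$) within $O(1)$ pieces. Without this choice of vertex and this quantitative bound, your gadget's step count is unbounded per vertex.

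A second, smaller gap: you assert that a planar dissection ``lifts'' to a refolding with planar, boundaryless intermediates, but this is not automatic. Cutting a piece free in both layers disconnects the manifold, which is disallowed; the paper's lifting device is Lemma~\ref{lem:rearrange}, which implements a piece move only when the source and target segments are related by a plane reflection (so the manifold can be folded flat along that line, glued, then cut along the separating curve and reglued layer-to-layer), and only when the transported piece avoids overlap. All of the paper's moves are engineered as $180\degree$ rotations about edge midpoints precisely to satisfy these conditions. Relatedly, your fallback triangle base case (``each is flat-dissectible to a common rectangle'') fails as stated: two equal-area triangles generally yield rectangles of different aspect ratios, and equal-area rectangle-to-rectangle is itself not an $O(1)$-piece dissection uniformly in the aspect ratio, which is why the paper invokes the prior 3-step result for doubly covered triangles rather than a dissection argument.
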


As a useful building block, we consider a simple 2-step refolding which allows removing a piece from the polygon, rotating it, and gluing it back elsewhere (similar to hinged dissection), provided we can fold the polygon to facilitate the gluing.

\begin{lemma}\label{lem:rearrange}
  Let $P$ be a subset of the plane homeomorphic to a closed disk,
  as visualized in Figure~\ref{fig:rearrange}.
  Suppose $A_1B_1$ and $A_2B_2$ are line segments on the boundary of $P$, such that there exists a plane reflection $r$ taking $A_1$ to $A_2$ and $B_1$ to $B_2$.
  Let $c$ be a simple curve starting and ending on the boundary of $P$ passing through its interior, so that it separates $P$ into two closed halves $P_1$ and $P_2$ containing $A_1B_1$ and $A_2B_2$ respectively.
  Let $f$ be the unique rotation and translation such that $f(A_1) = A_2$ and $f(B_1) = B_2$, and suppose $f(P_1)$ intersects $P_2$ only on $A_2B_2$.
  Then there is a 2-step refolding between the double covers of $P$ and $P'$, where $P' = f(P_1) \cup P_2$.
\end{lemma}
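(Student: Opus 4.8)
The plan is to work entirely at the level of cutting and gluing (using the general refolding model of Section~2) and simply compare the gluing data of the two target manifolds. Write $\manifold P$ and $\manifold Q$ for the double covers of $P$ and $P'$, and let $T_1,B_1$ (resp.\ $T_2,B_2$) denote the top and bottom copies of $P_1$ (resp.\ $P_2$). In $\manifold P$, the curve $c$ joins $T_1$ to $T_2$ and $B_1$ to $B_2$ flatly, while $\partial P$ (which contains both $\seg{A_1B_1}$ and $\seg{A_2B_2}$) joins each top copy to the corresponding bottom copy. Transporting $P_1$ by $f$ and $P_2$ by the identity, $\manifold Q$ has the same four pieces but its gluings differ in exactly two places: along $\seg{A_1B_1}\leftrightarrow\seg{A_2B_2}$ the two top copies are joined to each other (and likewise the two bottom copies), and along $c$ each top copy is instead folded onto its own bottom copy. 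So the entire task is to switch the gluing pattern along the segment pair and along $c$. I would route this through the intermediate manifold $\manifold I$, defined as the double cover of the surface $S$ obtained from $P$ by isometrically identifying $\seg{A_1B_1}$ with $\seg{A_2B_2}$ (matching $A_1\mapsto A_2$, $B_1\mapsto B_2$, which is legitimate since the existence of the reflection $r$ forces $|\seg{A_1B_1}|=|\seg{A_2B_2}|$).

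The first refolding step turns $\manifold P$ into $\manifold I$: cut $\manifold P$ along $\seg{A_1B_1}$ and $\seg{A_2B_2}$ (ungluing top from bottom along these two seams of the pillow), then reglue the two top segments to each other and the two bottom segments to each other, using the isometry $f|_{\seg{A_1B_1}}=r|_{\seg{A_1B_1}}$ (the two maps agree on the segment, being determined by their endpoint images). After the cut the surface is still connected, since the flat joins along $c$ together with the seams $\partial P\setminus(\seg{A_1B_1}\cup\seg{A_2B_2})$ keep all four pieces linked, and the regluing identifies equal-length segments, so this is one valid refolding step producing exactly $\manifold I$. The second refolding step turns $\manifold I$ into $\manifold Q$: cut along $c$ (ungluing the two flat joins $T_1$--$T_2$ and $B_1$--$B_2$) and reglue $c$ as a fold, joining each top copy to its own bottom copy. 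Connectivity again survives the cut, the pieces remaining bridged through the new segment identifications and the untouched seams, and gluing $c$ to itself is trivially isometric. The resulting top layer is $P_1$ and $P_2$ joined only along the segment pair, i.e.\ precisely the copy $f(P_1)\cup P_2 = P'$, with $c$ now part of the fold $\partial P'$, so the output is $\manifold Q$.

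The conceptual heart, and the reason two steps are needed rather than one, is that the change along $c$ cannot be made directly: cutting all of $c$ at once would sever $\manifold P$ into two disconnected sub-pillows, so I must first install the segment identification (Step~1) to keep the pieces bridged and only then remove the flat $c$-join (Step~2). The main work is therefore the bookkeeping that verifies the two re-gluings compose to exactly the identification pattern of $\manifold Q$, and this is where each hypothesis earns its place: the reflection $r$ makes the segments congruent and pins down the endpoint matching, so Step~1's gluing is well defined; $f$ being orientation-preserving sends the top copy of $P_1$ to the top copy of $f(P_1)$, so the transported piece lands on the matching layer and the top layer assembles into the planar set $P'$ rather than a reflected overlap; and the hypothesis $f(P_1)\cap P_2=\seg{A_2B_2}$ guarantees $P'$ is an embedded disk, so that its double cover is genuinely the intended doubly covered polygon $\manifold Q$. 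I expect the only real care to be in confirming connectivity after each cut and arc-length matching at each glue, after which $\manifold I$ is a bona fide (though in general non-planar) polyhedral manifold and the two steps constitute the claimed $2$-step refolding.
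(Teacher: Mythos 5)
Your proposal is correct and follows essentially the same route as the paper's proof, step for step: the paper performs exactly your two refolding steps in the same order (cut $\seg{A_1B_1}$ and $\seg{A_2B_2}$, reglue top layer to top layer and bottom to bottom; then cut $c$ in both layers and reglue each new boundary curve top-to-bottom), through the same intermediate manifold, and your observation about why the order is forced (cutting $c$ first would disconnect the pillow) is the right one. One substantive correction, though: your closing parenthetical that $\manifold I$ is "in general non-planar" is wrong, and it reveals that you have misidentified the role of the reflection hypothesis. Since the segment gluing is the restriction of $r$, the intermediate manifold can be realized by folding the doubly covered $P$ flat along the axis of $r$ --- a single fold --- so $\manifold I$ is planar (flat-folded), which is precisely what the paper notes and what is needed when this lemma is invoked in Theorem~\ref{thm:convex-polygon-refolding}, whose statement requires \emph{all} intermediate manifolds to be planar. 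The reflection $r$ is not there merely to make the segments congruent and pin down the endpoint matching (the isometry $f$ alone would give that); its purpose is to guarantee this flat-foldability of~$\manifold I$.
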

\begin{proof}
  The refolding is accomplished by the following steps,
  illustrated in Figure~\ref{fig:rearrange}:
  \begin{enumerate}
  \item Cut along $A_1B_1$ and $A_2B_2$, then glue the top layer of $A_1B_1$ to the top layer of $A_2B_2$ and similarly for the bottom layers.  This intermediate step can be folded flat by a single fold along the line of reflection of $r$.
  \item Cut along $c$ in both layers to create two new boundaries $c_1$ and $c_2$.  Then glue the top layer of $c_1$ to the bottom layer of $c_1$ and similarly for $c_2$.
  \qedhere
  \end{enumerate}
\end{proof}

\begin{figure}
  \centering
  \subcaptionbox{Folding to glue $A_1 B_1$ to $A_2 B_2$}{
    \begin{overpic}[scale=.2]{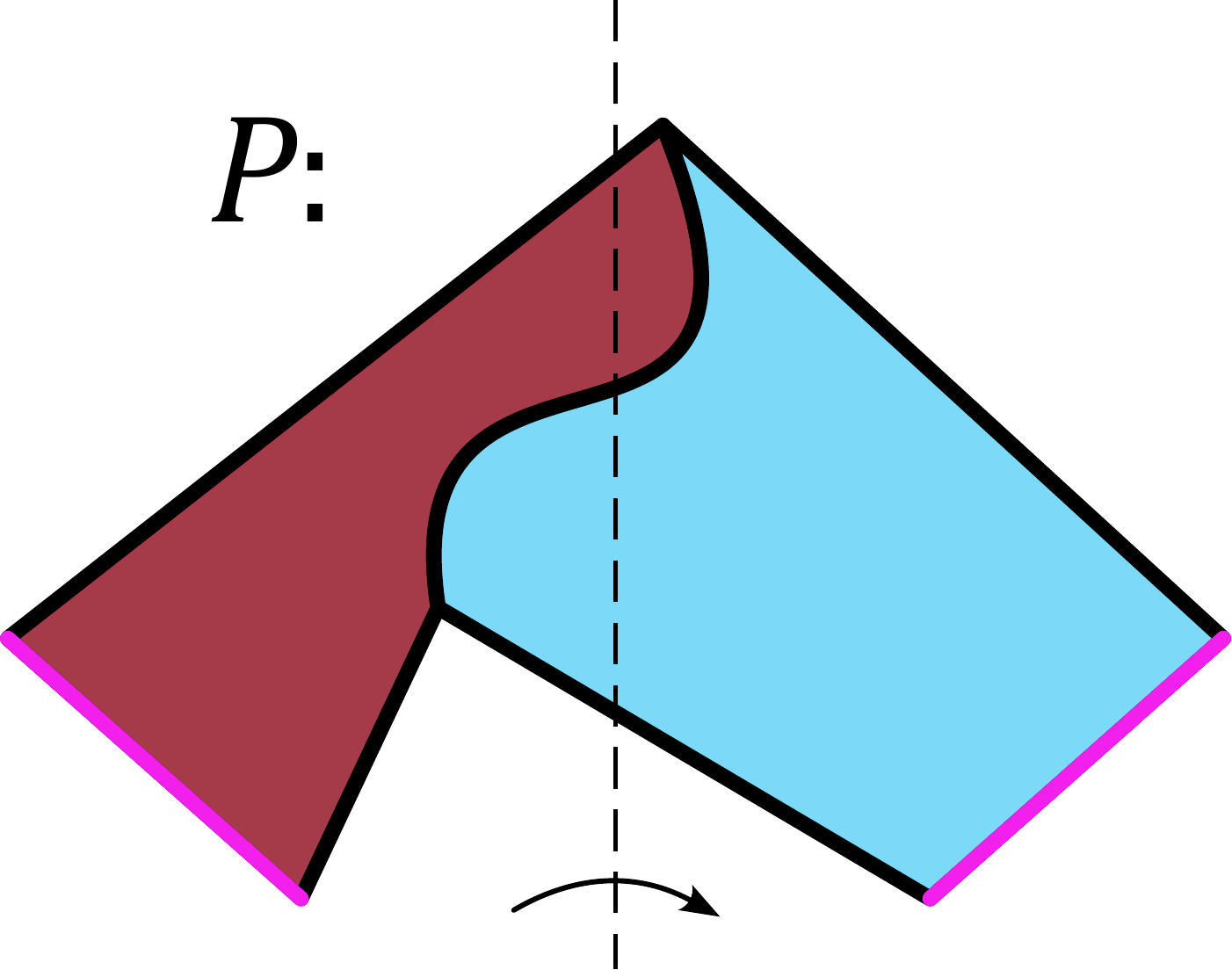}
      \put(15,-2){$A_1$}
      \put(-9,22){$B_1$}
      \put(75,-2){$A_2$}
      \put(102,22){$B_2$}
    \end{overpic}
    \vspace{1em}
  }
  \hfill
  \subcaptionbox{Cutting along $c$}[3cm]{
    \begin{overpic}[scale=.2]{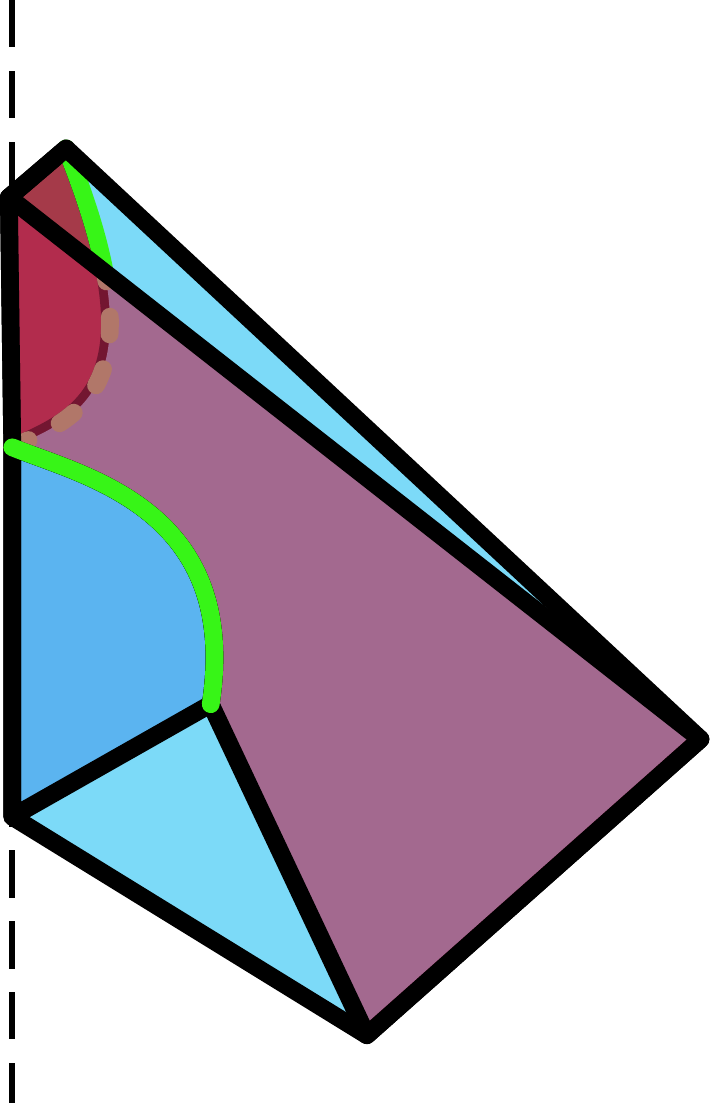}
      \put(34,-2){$A_2$}
      \put(65,22){$B_2$}
    \end{overpic}
    \vspace{1em}
  }
  \hfill
  \subcaptionbox{Finished}{
    \begin{overpic}[scale=.2]{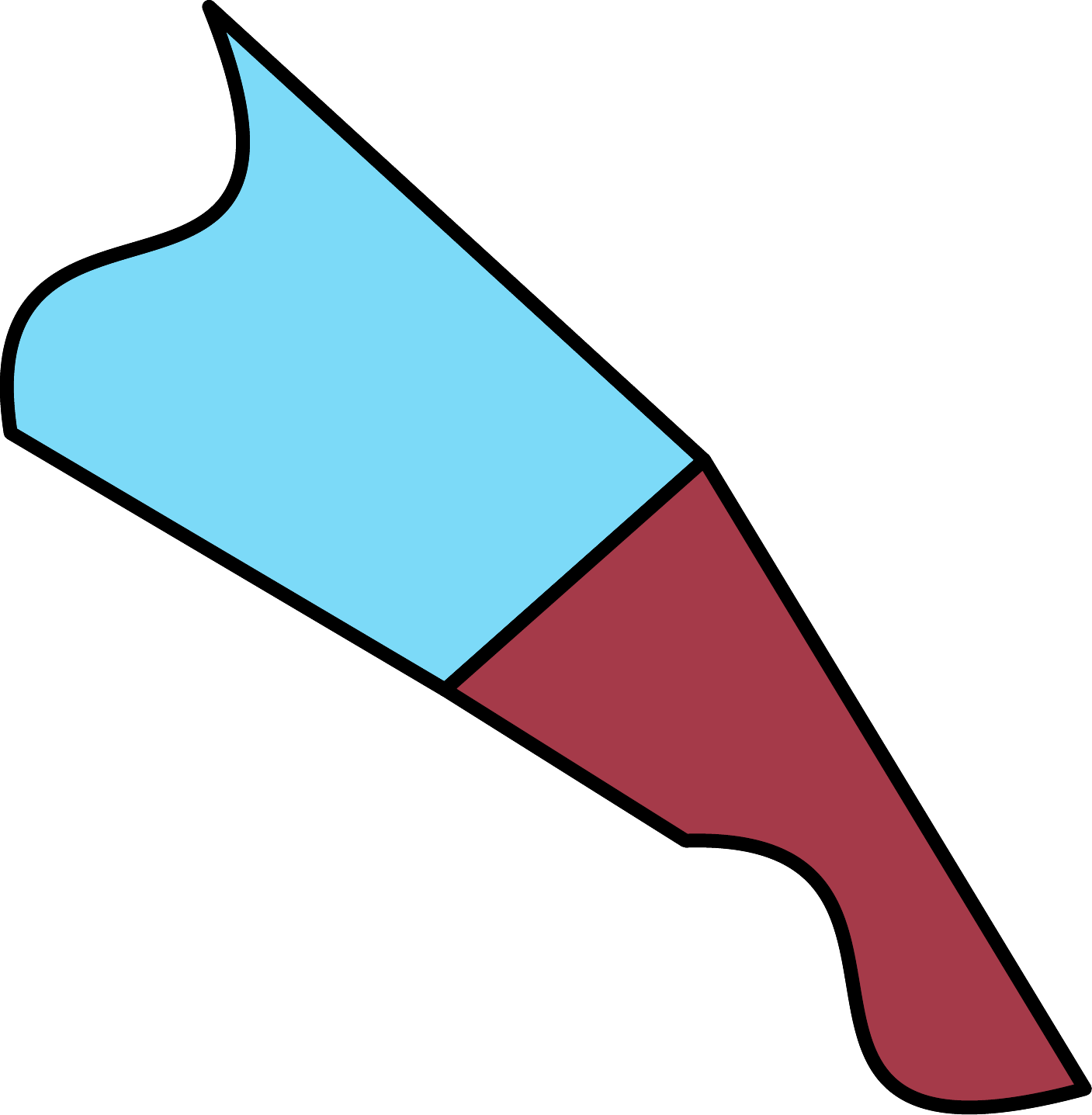}
      \put(30,28){$A_2$}
      \put(65,62){$B_2$}
    \end{overpic}
  }
  \caption{Rearranging two pieces via a 2-step refolding.}
  \label{fig:rearrange}
\end{figure}

Now we consider the triangle $\triangle ABC$ formed by three consecutive vertices $A,B,C$ on the boundary of a polygon $P$.
Our goal is to find an $O(1)$-step refolding of $P$ which moves the apex $B$ parallel to $AC$ (which preserves area).
This will allow us to move $B$ so that the interior angle at $C$ becomes $180\degree$, eliminating a vertex from~$P$.
By induction, this allows us to reduce any doubly covered polygon down
to a doubly covered triangle, and then we can use
a known 3-step refolding between doubly covered triangles
\cite[Theorem~2]{Refolding_CGTA}.
We accomplish the goal as follows:

\begin{lemma}\label{lem:acute}
  Let $P$ be a convex polygon with three consecutive vertices $A, B, C$
  such that the projection of $B$ onto $AC$ is between $A$ and $C$.
  Then there is an $O(1)$-step refolding between the double covers of $P$ and $P'$, where $P'$ is the polygon obtained from $P$ by replacing $\triangle ABC$ by a rectangle with base $AC$ with the same area.
\end{lemma}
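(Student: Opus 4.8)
The plan is to implement the classical three-piece dissection of a triangle into a rectangle, carrying out each of its two piece-motions by a single application of Lemma~\ref{lem:rearrange}. Put $\seg{AC}$ on the $x$-axis with $A=(0,0)$ and $C=(b,0)$, and write $B=(h_x,h)$ with $h>0$; the hypothesis that the projection $H=(h_x,0)$ of $B$ lies strictly between $A$ and $C$ is exactly $0<h_x<b$. Since $\triangle ABC$ has area $bh/2$, the target is the rectangle $R=[0,b]\times[0,h/2]$, whose top corners I call $A'=(0,h/2)$ and $C'=(b,h/2)$. Let $D=(h_x/2,h/2)$ and $E=((b+h_x)/2,h/2)$ be the midpoints of $\seg{AB}$ and $\seg{CB}$, let $F=(h_x,h/2)$ be the point of the midsegment $\seg{DE}$ directly below $B$ (interior to $\seg{DE}$ precisely because $0<h_x<b$), and split the top triangle $\triangle BDE$ along $\seg{BF}$ into $T_{BL}=\triangle BDF$ and $T_{BR}=\triangle BFE$. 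The trapezoid $ADEC$ is common to $\triangle ABC$ and $R$, the excess of $\triangle ABC$ is $T_{BL}\cup T_{BR}$, and the deficiency of $R$ consists of the two corner triangles $G_L=\triangle AA'D$ and $G_R=\triangle CC'E$. A direct computation confirms that the half-turn about $D$ maps $T_{BL}$ onto $G_L$ and the half-turn about $E$ maps $T_{BR}$ onto $G_R$; thus relocating these two pieces turns $\triangle ABC$ into $R$.

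To relocate $T_{BR}$ I would invoke Lemma~\ref{lem:rearrange} with the moved half $P_1=T_{BR}$, cut out by the curve $c$ that runs from $E$ through $F$ to $B$, taking the two boundary segments to be the upper and lower halves $\seg{BE}$ and $\seg{CE}$ of the side $\seg{CB}$. The governing motion, namely the unique rotation/translation with $f(B)=C$ and $f(E)=E$, is the half-turn about $E$, which carries $\seg{BE}$ onto $\seg{CE}$; the required reflection $r$ is the one across the perpendicular bisector of $\seg{BC}$, fixing the midpoint $E$ and interchanging $B$ and $C$. The symmetric application — moved half $T_{BL}$, curve from $D$ through $F$ to $B$, segments $\seg{BD}$ and $\seg{AD}$, and the half-turn and perpendicular-bisector reflection attached to $\seg{AB}$ — relocates $T_{BL}$ onto $G_L$. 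Chaining the two invocations (from the double cover of $\triangle ABC$, through that of the intermediate polygon in which only $T_{BR}$ has been moved, to that of $R$) costs $2+2=4$ refolding steps, which is $O(1)$; and since every manifold arising in Lemma~\ref{lem:rearrange} is a (possibly flat-folded) double cover of a planar region, all intermediates are planar with no boundary, as Theorem~\ref{thm:convex-polygon-refolding} demands.

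The step that needs care is the non-overlap hypothesis $f(P_1)\cap P_2=\seg{A_2B_2}$, and this is exactly where the convexity of $P$ enters. Because $A$, $B$, $C$ are consecutive vertices, the chord $\seg{AC}$ cuts the convex polygon $P$ into the ear $\triangle ABC$ and a remainder lying on the opposite side of the line $AC$; taking that line to be the $x$-axis, the remainder sits in the half-plane $y\le 0$. Each relocated corner triangle $G_R$ (respectively $G_L$) lies in $y\ge 0$ and meets the $x$-axis only at $C$ (respectively $A$), so it touches the remainder of $P$ only at that single vertex and meets the trapezoid only along the intended glue segment $\seg{CE}$ (respectively $\seg{AD}$); hence $f(P_1)\cap P_2$ reduces to that segment, and the polygon $P'$ obtained by substituting $R$ for the ear is again simple and homeomorphic to a disk. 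I would also flag one mild degeneracy: in each application the two glue segments are the two halves of a single side meeting at its midpoint, so the first refolding step of Lemma~\ref{lem:rearrange} simply folds that side in half along its perpendicular bisector — precisely the fold along the reflection axis of $r$ that the lemma provides.
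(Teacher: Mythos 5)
Your proposal is correct and is essentially the paper's own proof: your $D$, $E$, $F$ are the paper's $X$, $Y$, $O$ (midpoints of $\seg{AB}$, $\seg{BC}$, and the projection of $B$ onto the midsegment), and both arguments perform the same classical dissection by rotating the two apex sub-triangles $180^\circ$ about the side midpoints via two applications of Lemma~\ref{lem:rearrange}. The only difference is that you explicitly verify the non-overlap hypothesis of Lemma~\ref{lem:rearrange} and count the four refolding steps, details the paper leaves implicit.
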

\begin{proof}
  Refer to Figure~\ref{fig:acute}.
  Let $X$ be the midpoint of $AB$, $Y$ be the midpoint of $BC$, and $O$ be the projection of $B$ onto $XY$.  Using Lemma~\ref{lem:rearrange}, we rotate $\triangle XBO$ by $180\degree$ about $X$,
  and similarly we rotate $\triangle YBO$ by $180\degree$ about $Y$.
  This forms the desired rectangle.
\end{proof}

\begin{figure}
  \centering
  \begin{overpic}[scale=.3]{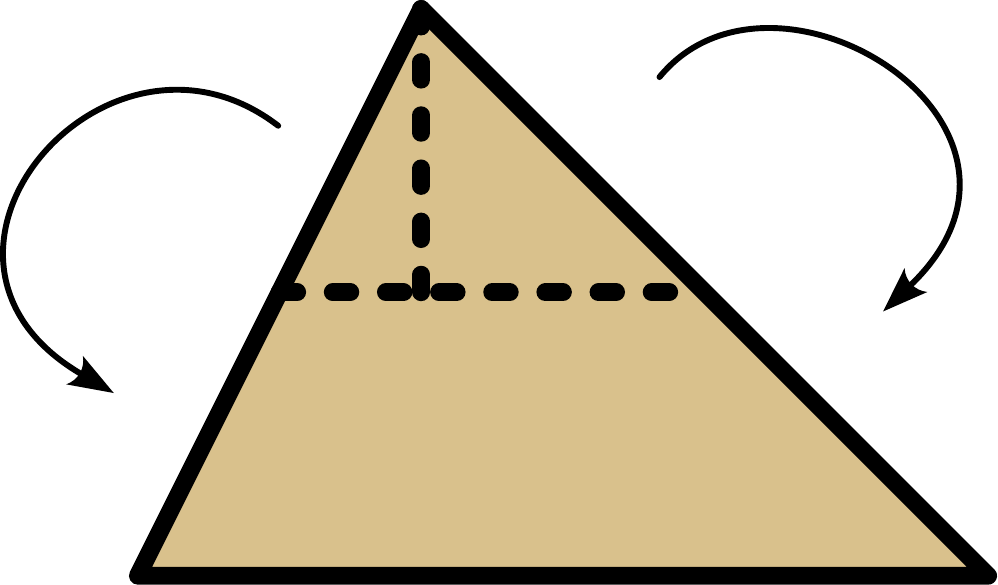}
    \put(6,-2){\makebox(0,0)[c]{\strut $A$}}
    \put(42,65){\makebox(0,0)[c]{\strut $B$}}
    \put(106,-2){\makebox(0,0)[c]{\strut $C$}}
    \put(22,32){\makebox(0,0)[c]{\strut $X$}}
    \put(78,32){\makebox(0,0)[c]{\strut $Y$}}
    \put(42,22){\makebox(0,0)[c]{\strut $O$}}
  \end{overpic}
  \caption{Refolding $\triangle ABC$ into a rectangle when $B$ is between $A$ and $C$.}
  \label{fig:acute}
\end{figure}

\begin{lemma}\label{lem:scalene}
  Let $P$ be a convex polygon with three consecutive vertices $A, B, C$,
  and let $B_1$ be the projection of $B$ onto $AC$.
  Suppose that $C$ is between $A$ and $B_1$,
  and $|CB_1| \le 4|AC|$.
  Then there is an $O(1)$-step refolding between the double covers of $P$ and $P'$, where $P'$ is the polygon obtained from $P$ by replacing $\triangle ABC$ by a rectangle with base $AC$ with the same area.
\end{lemma}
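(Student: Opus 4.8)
The plan is to handle the obtuse configuration of Lemma~\ref{lem:scalene}, in which the foot $B_1$ of the altitude from $B$ lies beyond $C$, by a different route than Lemma~\ref{lem:acute}: the two corner-folds used there would overshoot the target rectangle precisely because the apex $B$ overhangs the base $\seg{AC}$. Instead I would first convert $\triangle ABC$ into a parallelogram over the same base $\seg{AC}$, and then straighten that parallelogram into the rectangle $P'$, using only $O(1)$ applications of Lemma~\ref{lem:rearrange}.

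First I would fold the triangle into a parallelogram. Let $X$ and $Y$ be the midpoints of $\seg{AB}$ and $\seg{BC}$, and apply Lemma~\ref{lem:rearrange} with separating curve the midline $\seg{XY}$, moved piece the apex triangle above it, and motion the $180\degree$ rotation about $Y$ (so the two matched segments are the two halves of $\seg{BC}$, and the required reflection across its perpendicular bisector exists). This rotates the apex down so that $B$ maps onto $C$, turning $\triangle ABC$ into a parallelogram with base $\seg{AC}$, height equal to half the height of $B$ above $\seg{AC}$, and a horizontal shear $s$ equal to half the horizontal distance from $A$ to $B$. The hypothesis $|CB_1| \le 4|AC|$ bounds this shear by $s \le 5|AC|/2$.

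Next I would straighten the parallelogram into the rectangle with base $\seg{AC}$ and the same height. Cutting off a corner triangle that overhangs $\seg{AC}$ on the right and reattaching it on the left is a translation, which is not a single move of Lemma~\ref{lem:rearrange}; but two $180\degree$ rotations compose to a translation, so each such relocation costs two applications of Lemma~\ref{lem:rearrange} (with the two rotation centers chosen on the edges being matched, so the reflection hypothesis holds at each step). Since the shear is at most $5|AC|/2$ and each relocation shifts material by up to one base-width $|AC|$, at most three relocations suffice, for a total of $O(1)$ folds; the result is exactly the rectangle, whose double cover is the desired $P'$.

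The step I expect to be the main obstacle is verifying the non-overlap hypothesis of Lemma~\ref{lem:rearrange} throughout. At the initial fold one must check that the rotated apex triangle meets the lower trapezoid only along $\seg{YC}$; and at each straightening step one must check that both the intermediate and the final positions of the relocated corner stay within the strip over $\seg{AC}$ and meet the rest only along the glued edge. One must also confirm that the folds compose to leave exactly two layers everywhere, so that the final manifold is the genuine double cover of the rectangle rather than a multi-layered planar manifold. The bound $|CB_1| \le 4|AC|$ is exactly what keeps the number of straightening steps constant and prevents the intermediate overhangs from forcing an overlap; checking these geometric inequalities is the routine but careful part of the proof.
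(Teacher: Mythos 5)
Your Stage~1 (rotating the apex triangle $180\degree$ about the midpoint $Y$ of $\seg{BC}$ to form a parallelogram on base $\seg{AC}$) matches the paper's first move up to the choice of center, and is fine. The gap is in Stage~2, in the claim that each cut-and-translate relocation ``costs two applications of Lemma~\ref{lem:rearrange} (with the two rotation centers chosen on the edges being matched, so the reflection hypothesis holds at each step).'' Lemma~\ref{lem:rearrange} requires the matched segments to be related by a \emph{pure plane reflection}, not merely congruent: a reflection $r$ with $r(A_1)=A_2$, $r(B_1)=B_2$ exists only when the perpendicular bisectors of $\seg{A_1A_2}$ and $\seg{B_1B_2}$ coincide (otherwise the unique orientation-reversing isometry is a glide reflection). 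For a $180\degree$ rotation about $O$ both bisectors pass through $O$, and they coincide exactly when $A_1$, $B_1$, $O$ are collinear; so a $180\degree$-rotation move is legal only when the glued segments lie on a common line through the center. Now check your relocation: the overhang cut off at the vertical line through $C$ exposes only a slanted edge and a horizontal top edge on the old boundary. A first flip about the point where the top edges meet is legal (both matched segments lie on the top line through the center), but the second rotation, about the center forced by the requirement that the composition be the translation by $-|AC|$, must glue the image of the slanted edge to the remainder's opposite slanted edge --- and those two segments are related by a glide reflection, not a reflection (their perpendicular bisectors are parallel but distinct). So that application of Lemma~\ref{lem:rearrange} is invalid, and ``two rotations compose to a translation'' does not by itself yield two legal refolding moves. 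This is not the routine non-overlap verification you flagged as the main obstacle; it is a failure of the reflection hypothesis itself.

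Your premise that a translation ``is not a single move of Lemma~\ref{lem:rearrange}'' is also too strong, and correcting it points to the fix the paper actually uses: a translation \emph{is} a single legal move whenever the matched segments are perpendicular to the translation direction, since then the reflection across the perpendicular line midway between them works. The paper's proof engineers exactly this situation: after forming the parallelogram $ACYY'$ (rotation about the midpoint $X$ of $\seg{AB}$), it folds the two corner triangles $\triangle AVV_1$ and $\triangle YWW_2$ by $180\degree$ about the midpoints $V$, $W$ of the \emph{slanted} sides --- legal because each matched pair consists of the two collinear halves of a slanted side through its midpoint --- producing a rectangle $V_1W_1W_2V_2$ whose base is shifted from $\seg{AC}$ by $|AV_1| = \frac14|CB_1| \le |AC|$. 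This is precisely where the hypothesis $|CB_1|\le 4|AC|$ enters (it guarantees $V_1\in\seg{AC}$ and $W_2\in\seg{Y'Y}$, and that the final shift is at most one base length). The last move then translates the overhang rectangle $CW_1W_2C_2$ to $AV_1V_2A_2$, glued along \emph{vertical} edges, where the reflection exists: four applications of Lemma~\ref{lem:rearrange} in all, with no iterated relocations. To repair your argument you would need to replace each cut-and-translate step by such slant-midpoint corner folds followed by a vertical-edge translation, at which point you have essentially reconstructed the paper's proof.
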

\begin{proof}
  Refer to Figure~\ref{fig:scalene}, where all point labels remain fixed in the plane across all subfigures.
  Let $X$ be the midpoint of $AB$ and $Y$ be the midpoint of $BC$
  (Figure~\ref{fig:scalene-1}).
  For this proof, we will adopt the convention that $p_1$ denotes the projection of point $p$ onto $AC$ and $p_2$ denotes the projection of $p$ onto $XY$.

  Using Lemma~\ref{lem:rearrange}, we rotate $\triangle XYB$ by $180\degree$ about $X$ (Figure~\ref{fig:scalene-1}), forming the parallelogram $ACYY'$ (Figure~\ref{fig:scalene-2}).
  Now let $W$ be the midpoint of $CY$ and $V$ be the midpoint of $AY'$.
  We have
  \[|AV_1| = |W_2Y| = \frac14|CB_1| \le |AC| = |Y'Y|,\]
  which implies $V_1$ lies on $AC$ and $W_2$ lies on $Y'Y$.
  Using Lemma~\ref{lem:rearrange} twice, we rotate $\triangle AVV_1$ by $180\degree$ about $V$, and $\triangle YWW_2$ by $180\degree$ about $W$.
  This forms a rectangle $V_1 V_2 W_2 W_1$.
  Finally, we use Lemma~\ref{lem:rearrange} again to move the rectangle $C C_2 W_2 W_1$ to $A A_2 V_1 V_2$.
\end{proof}

\begin{figure}
  \centering
  \subcaptionbox{\label{fig:scalene-1}Triangle to parallelogram}{
    \begin{overpic}[scale=.15]{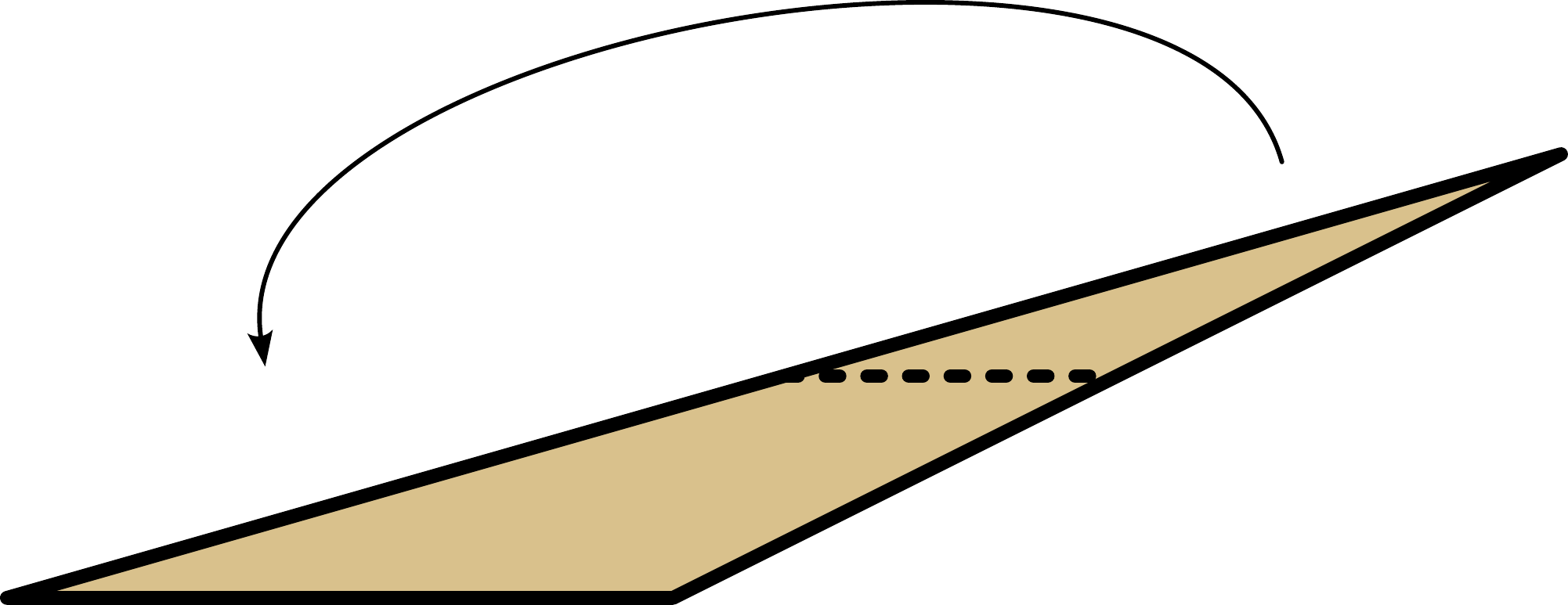}
    \put(-2,-4){\makebox(0,0)[c]{\strut $A$}}
    \put(104,30){\makebox(0,0)[c]{\strut $B$}}
    \put(45,-4){\makebox(0,0)[c]{\strut $C$}}
    \put(47,18){\makebox(0,0)[c]{\strut $X$}}
    \put(75,10){\makebox(0,0)[c]{\strut $Y$}}
    \end{overpic}
    \vspace{1em}
  }
  \hfill
  \subcaptionbox{\label{fig:scalene-2}Parallelogram to rectangle}[5cm]{
    \begin{overpic}[scale=.15]{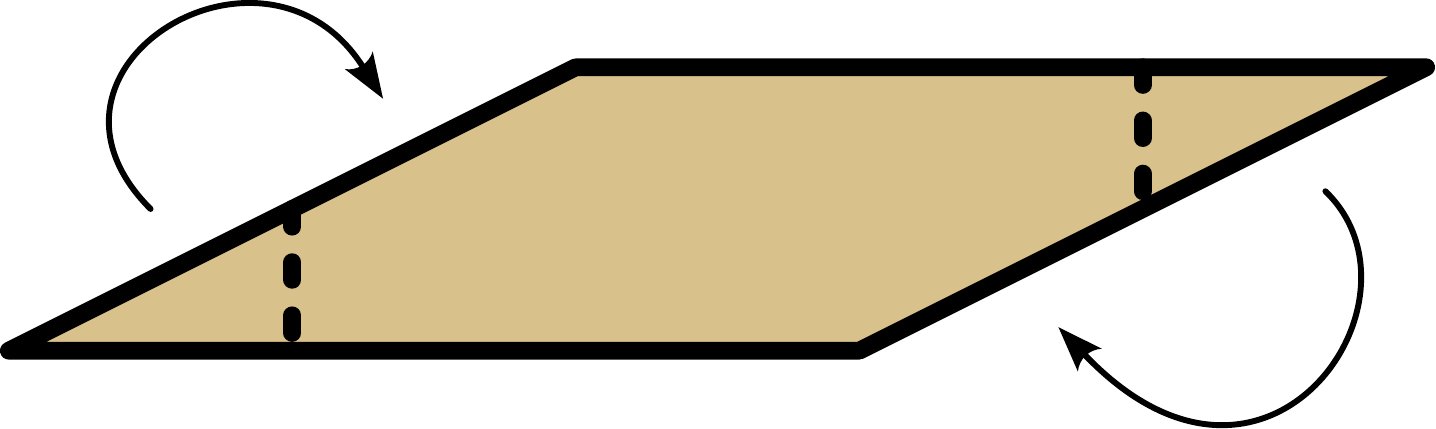}
    \put(-2,-2){\makebox(0,0)[c]{\strut $A$}}
    \put(104,30){\makebox(0,0)[c]{\strut $Y$}}
    \put(60,-2){\makebox(0,0)[c]{\strut $C$}}
    \put(40,30){\makebox(0,0)[c]{\strut $Y'$}}
    \put(85,8){\makebox(0,0)[c]{\strut $W$}}
    \put(80,32){\makebox(0,0)[c]{\strut $W_2$}}
    \put(18,20){\makebox(0,0)[c]{\strut $V$}}
    \put(22,-2){\makebox(0,0)[c]{\strut $V_1$}}
    \end{overpic}
    \vspace{1em}
  }
  \hfill
  \subcaptionbox{\label{fig:scalene-3}Shifting the rectangle}{
    \begin{overpic}[scale=.15]{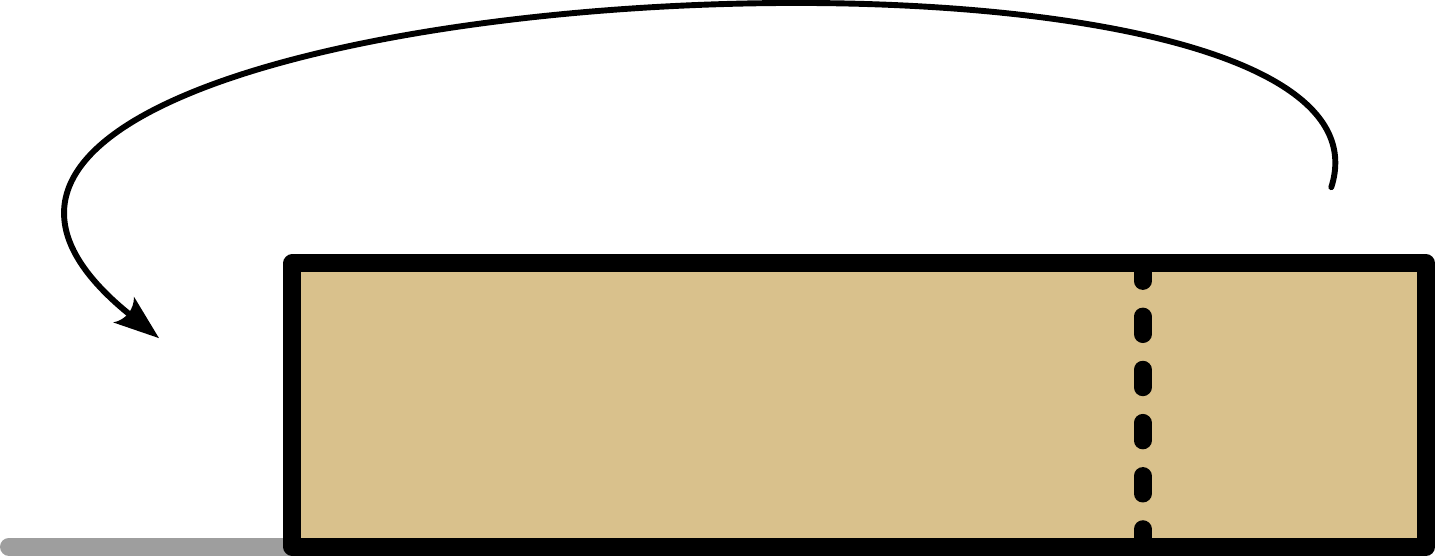}
    \put(-2,-7){\makebox(0,0)[c]{\strut $A$}}
    \put(-5,25){\makebox(0,0)[c]{\strut $A_2$}}
    \put(80,27){\makebox(0,0)[c]{\strut $C_2$}}
    \put(106,25){\makebox(0,0)[c]{\strut $W_2$}}
    \put(106,-7){\makebox(0,0)[c]{\strut $W_1$}}
    \put(80,-7){\makebox(0,0)[c]{\strut $C$}}
    \put(22,-7){\makebox(0,0)[c]{\strut $V_1$}}
    \put(22,27){\makebox(0,0)[c]{\strut $V_2$}}
    \end{overpic}
    \vspace{1em}
  }
  \caption{Refolding $\triangle ABC$ into a rectangle when $B$ is not between $A$ and $C$.}
  \label{fig:scalene}
\end{figure}

\begin{lemma}\label{lem:convex-projection}
  Let $P$ be a convex polygon with three consecutive vertices $A, B, C$
  and let $Z$ be another vertex of $P$ such that the interior angle at $Z$ is at most the interior angle at $B$.
  Let $\ell$ be the line through $B$ parallel to $AC$, $Q_1$ be the intersection of $ZA$ with $\ell$, and $Q_2$ be the intersection of $ZC$ with $\ell$.
  Then $\min\{|Q_1B|, |Q_2B|\} \le |AC|$.
\end{lemma}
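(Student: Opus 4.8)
The plan is to reduce the inequality to a transparent comparison of the base angles that the chord $AC$ subtends at $B$ and at $Z$. First I would place $AC$ on a horizontal axis, $A=(0,0)$ and $C=(c,0)$ with $c=|AC|$, so that $\ell$ is the horizontal line $y=h$ through $B=(b,h)$, $h>0$. Since $A,B,C$ are consecutive vertices of the convex polygon $P$ and $Z$ is another vertex, the chord $AC$ cuts off the triangle $\triangle ABC$, forcing $Z$ onto the opposite side of line $AC$ from $B$; write $Z=(z,-d)$ with $d>0$. Convexity gives more: taken in the cyclic order in which they occur on $\partial P$, the four points $A,B,C,Z$ form a convex quadrilateral.

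Next I would introduce the four base angles $\alpha_1=\angle BAC$, $\gamma_1=\angle BCA$ (from $\triangle ABC$) and $\alpha_2=\angle ZAC$, $\gamma_2=\angle ZCA$ (from $\triangle AZC$). A short intercept computation along $\ell$ then yields the three identities
\[ |Q_1B| = h(\cot\alpha_1 + \cot\alpha_2), \quad |Q_2B| = h(\cot\gamma_1 + \cot\gamma_2), \quad |AC| = h(\cot\alpha_1 + \cot\gamma_1). \]
With these in hand the conclusion becomes a one-line comparison: cancelling the common term and dividing by $h>0$ (and using that $\cot$ is decreasing on $(0,\pi)$),
\[ |Q_1B|\le|AC| \iff \cot\alpha_2\le\cot\gamma_1 \iff \alpha_2\ge\gamma_1, \]
and symmetrically $|Q_2B|\le|AC| \iff \gamma_2\ge\alpha_1$.

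Finally I would feed in the angle hypothesis. The interior angle of $P$ at $B$ is exactly $\angle ABC=180^\circ-\alpha_1-\gamma_1$; and since $A$ and $C$ both lie inside the interior cone at the convex vertex $Z$, we have $\angle AZC\le(\text{interior angle at }Z)<(\text{interior angle at }B)=\angle ABC$, so $180^\circ-\alpha_2-\gamma_2<180^\circ-\alpha_1-\gamma_1$, i.e.\ $\alpha_2+\gamma_2>\alpha_1+\gamma_1$. Now suppose for contradiction that $\min\{|Q_1B|,|Q_2B|\}>|AC|$; by the two equivalences this forces $\alpha_2<\gamma_1$ and $\gamma_2<\alpha_1$, and adding gives $\alpha_2+\gamma_2<\alpha_1+\gamma_1$, contradicting the displayed inequality. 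Hence at least one of $|Q_1B|,|Q_2B|$ is at most $|AC|$.

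I expect the main obstacle to be bookkeeping rather than conceptual, namely getting the signs right in the three cotangent identities. An individual cotangent can be negative (when a foot of an altitude falls outside $\overline{AC}$, as in the scalene situation of Lemma~\ref{lem:scalene}), so the identities must be read with signed cotangents, and the claim that $h(\cot\alpha_1+\cot\alpha_2)$ and $h(\cot\gamma_1+\cot\gamma_2)$ are the \emph{actual} distances $|Q_1B|,|Q_2B|$ relies on $B$ lying between $Q_1$ and $Q_2$ on $\ell$. I would establish this betweenness from the convexity of $ABCZ$ (equivalently, from checking consistent turning at $A$ and at $C$), which translates precisely into $\cot\alpha_1+\cot\alpha_2>0$ and $\cot\gamma_1+\cot\gamma_2>0$. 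The secondary point worth care is the reduction of the hypothesis ``interior angle at $Z$ smaller than at $B$'' to the usable inequality $\angle AZC<\angle ABC$, which uses the convex-polygon visibility fact that the chord angle $\angle AZC$ never exceeds the interior angle at $Z$.
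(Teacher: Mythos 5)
Your proof is correct, but it takes a genuinely different route from the paper's. The paper argues synthetically: it constructs $B'$ so that $ABCB'$ is a parallelogram, observes that $Z$ cannot lie inside $\triangle ACB'$ (else $\angle AZC > \angle AB'C = \angle ABC$, so by convexity the interior angle at $Z$ would exceed that at $B$), and then reads off $|Q_1B| \le |AC|$ or $|Q_2B| \le |AC|$ directly from whether $Z$ falls below line $AB'$ or below line $CB'$. You instead encode everything in the four signed base cotangents, reduce both distance comparisons to $\cot\alpha_2 \lessgtr \cot\gamma_1$ and $\cot\gamma_2 \lessgtr \cot\alpha_1$, and close with a summation contradiction against $\alpha_2+\gamma_2 > \alpha_1+\gamma_1$, which you derive from the angle hypothesis via the same chord-angle fact ($\angle AZC \le$ interior angle at $Z$) that powers the paper's triangle-exclusion step. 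The approaches rest on identical convexity inputs, but they buy different things: the paper's argument is shorter, picture-driven, and tells you \emph{which} of $Q_1,Q_2$ is the close one according to which side of $B'$ the vertex $Z$ lies on; your argument is computational but makes fully explicit the sign and betweenness issues that the paper leaves implicit --- in particular, the paper's final claim that ``$Z$ below $AB'$ implies $|Q_1B| \le |AC|$'' silently uses convexity of $P$ at $A$ (that $Z$ lies on the $C$-side of line $AB$) to keep $Q_1$ within distance $|AC|$ on the near side of $B$, and this is exactly what your positivity conditions $\cot\alpha_1+\cot\alpha_2 > 0$ and $\cot\gamma_1+\cot\gamma_2 > 0$ (equivalently $\angle ZAB < 180^\circ$ and $\angle BCZ < 180^\circ$, from convexity of the sub-quadrilateral $ABCZ$) certify. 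Your attention to the signed-cotangent caveat is well placed and is precisely where a careless version of this computation would break.
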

\begin{proof}
  Refer to Figure~\ref{fig:convex-projection}.
  Construct $B'$ so that $ABCB'$ is a parallelogram.
  Vertex $Z$ cannot lie in the interior of $\triangle ACB'$ or else its interior angle would be larger than that of $B$ (by convexity of~$P$).
  Thus $Z$ is either below $AB'$ or below $CB'$; 
  in the first case, we have $|Q_1B| \le |AC|$, and in the second case, we have $|Q_2B| \le |AC|$.
\end{proof}

\begin{figure}
  \centering
  \begin{overpic}[scale=.3]{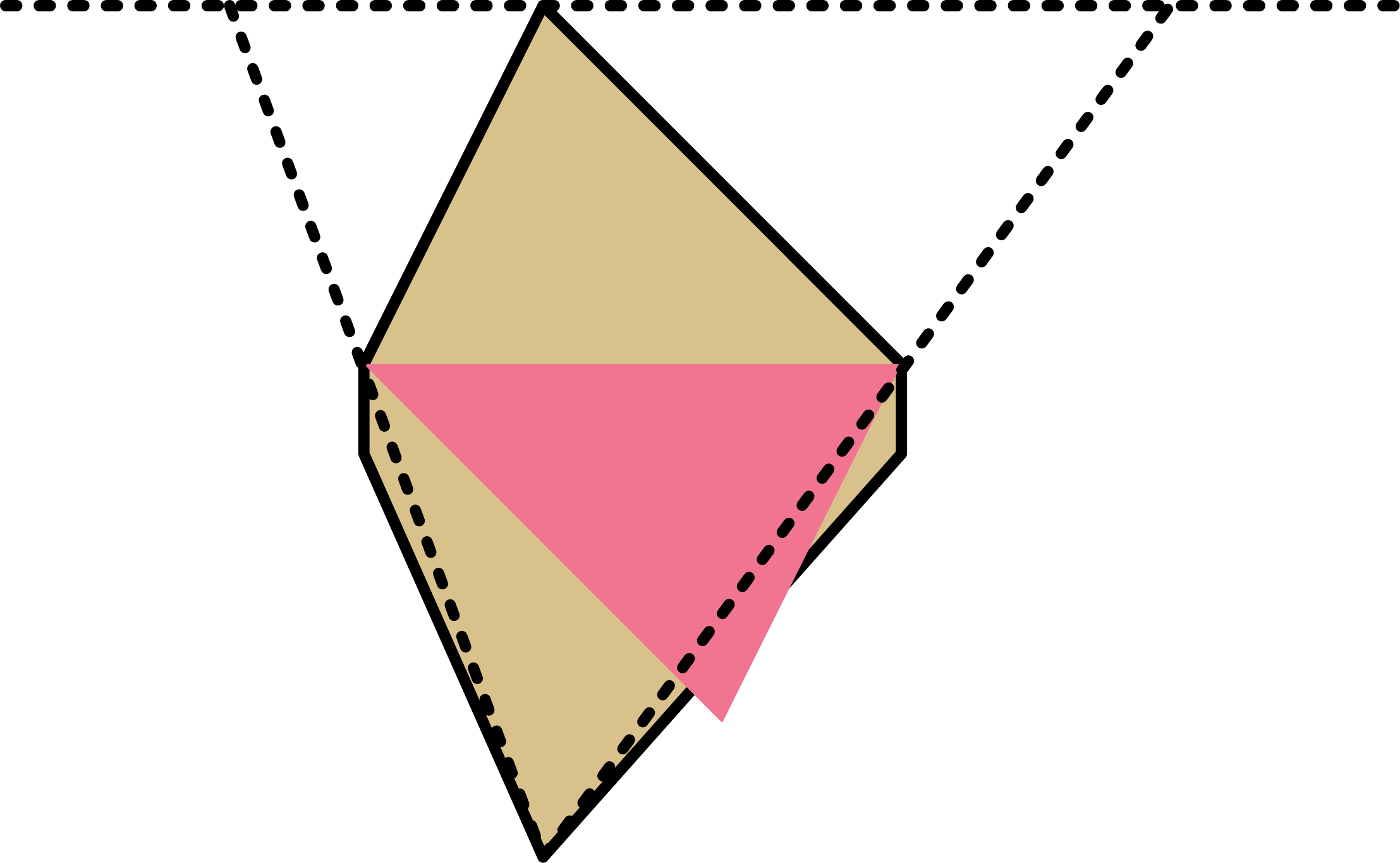}
    \put(23,35){\makebox(0,0)[c]{\strut $A$}}
    \put(1,58){\makebox(0,0)[c]{\strut $\ell$}}
    \put(39.25,57){\makebox(0,0)[c]{\strut $B$}}
    \put(53,8){\makebox(0,0)[c]{\strut $B'$}}
    \put(38.5,-2.5){\makebox(0,0)[c]{\strut $Z$}}
    \put(67,35){\makebox(0,0)[c]{\strut $C$}}
    \put(15,58){\makebox(0,0)[c]{\strut $Q_1$}}
    \put(85,58){\makebox(0,0)[c]{\strut $Q_2$}}
  \end{overpic}
  \vspace{1em}
  \caption{Bounding the distance from $B$ to $Q_1, Q_2$.}
  \label{fig:convex-projection}
\end{figure}

\begin{corollary}\label{cor:convex-projection2}
  Let $P$ be a convex polygon with five consecutive vertices $D_1, A, B, C, D_2$ (where possibly $D_1 = D_2$)
  such that the interior angle at $B$ is at least as large as the interior angles at $D_1$ and $D_2$.
  Let $\ell$ be the line through $B$ parallel to $AC$, $Q_1$ be the intersection of $D_1A$ with $\ell$, and $Q_2$ be the intersection of $D_2C$ with $\ell$.
  Then $\min\{|Q_1B|, |Q_2B|\} \le |AC|$.
\end{corollary}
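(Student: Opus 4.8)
The plan is to mirror the proof of Lemma~\ref{lem:convex-projection}, reducing this two-vertex statement to a single clean angular dichotomy that is then settled by an exterior-angle count. As in that proof, I would construct $B'$ so that $ABCB'$ is a parallelogram; then line $AB'$ is parallel to $BC$, line $CB'$ is parallel to $AB$, and $\angle AB'C=\angle ABC$, with $\triangle ACB'$ lying on the opposite side of $AC$ from $B$. The computation inside Lemma~\ref{lem:convex-projection} establishes a purely positional fact: any point lying below line $AB'$ (on the opposite side from $B$), when joined to $A$ and extended to $\ell$, meets $\ell$ within distance $|AC|$ of $B$; and symmetrically, any point below line $CB'$, joined to $C$ and extended to $\ell$, meets $\ell$ within $|AC|$ of $B$. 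Since $Q_1$ lies on line $D_1A$ (which passes through $A$) and $Q_2$ lies on line $D_2C$ (which passes through $C$), it therefore suffices to prove the dichotomy that $D_1$ lies below $AB'$ or $D_2$ lies below $CB'$.

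Next I would translate this positional dichotomy into interior angles. Because line $AB'$ is parallel to $BC$, the angle $\angle B'AB$ equals $\pi-\angle ABC$, so the edge $AD_1$ falls below line $AB'$ precisely when the interior angle $\angle D_1AB$, which is the interior angle $\angle A$ of $P$ at $A$, exceeds $\pi-\angle B$; that is, $D_1$ lies below $AB'$ if and only if $\angle A+\angle B>\pi$, where $\angle A,\angle B,\angle C$ denote the interior angles of $P$ at $A,B,C$. Convexity forces $\angle A<\pi$, which keeps the ray $AD_1$ from swinging past the range in which the positional distance bound is valid, so the implication into $|Q_1B|\le|AC|$ is safe. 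Symmetrically, $D_2$ lies below $CB'$ if and only if $\angle C+\angle B>\pi$. Thus the whole statement reduces to showing $\angle A+\angle B>\pi$ or $\angle C+\angle B>\pi$.

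Finally I would prove this angular dichotomy by contradiction, using that the exterior angles of a convex polygon are positive and sum to $2\pi$. Writing $\hat X=\pi-\angle X$ for the exterior angle at a vertex $X$, suppose both $\angle A+\angle B\le\pi$ and $\angle C+\angle B\le\pi$, i.e.\ $\hat A+\hat B\ge\pi$ and $\hat C+\hat B\ge\pi$. The hypothesis that $\angle B$ exceeds $\angle D_1$ and $\angle D_2$ gives $\hat D_1,\hat D_2>\hat B>0$. Summing the exterior angles at the five vertices $D_1,A,B,C,D_2$ then yields a total strictly greater than $2\pi$, contradicting that these are positive terms of the global sum $2\pi$; when $D_1=D_2$ the polygon is the quadrilateral $D_1ABC$ and the same count applies (or one appeals directly to Lemma~\ref{lem:convex-projection} with $Z=D_1$). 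The main obstacle is not this final count, which is short, but setting up the reduction correctly: extracting the positional-to-distance implication from Lemma~\ref{lem:convex-projection} and checking that convexity rules out the degenerate configuration in which $D_1$ (resp.\ $D_2$) lies below $AB'$ (resp.\ $CB'$) yet its line through $A$ (resp.\ $C$) crosses $\ell$ farther than $|AC|$ from $B$.
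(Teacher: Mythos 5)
Your proposal is correct, but it takes a genuinely different route from the paper. The paper's proof is a two-line reduction: by convexity, $D_1$ lies below the line $D_2A$, so $|Q_1B|$ is at most the distance from $B$ to the intersection of $D_2A$ with $\ell$; the bound then follows immediately from Lemma~\ref{lem:convex-projection} applied with $Z = D_2$ (whose hypothesis holds since the interior angle at $D_2$ is smaller than at $B$). You instead unpack the parallelogram construction inside the lemma's proof, convert the positional conditions into the angle criterion ``$D_1$ below $AB'$ iff $\angle A + \angle B > \pi$'' (and symmetrically for $D_2$, $C$), and then establish the dichotomy $\angle A + \angle B > \pi$ or $\angle B + \angle C > \pi$ by an exterior-angle count over the five vertices, which is where the hypothesis $\angle B > \angle D_1, \angle D_2$ enters; your count checks out in both the pentagon and the degenerate $D_1 = D_2$ quadrilateral case (where the five-vertex sum is replaced by the exact total $2\pi$), and your observation that convexity at $A$ confines the ray $AD_1$ to directions strictly between $AB'$ and the extension of $BA$, where the rotating-line intersection with $\ell$ moves monotonically from distance $|AC|$ toward $B$, correctly patches the one place the positional-to-distance implication could fail. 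What each approach buys: the paper's version is shorter and reuses the lemma as a black box, needing only one monotonicity observation about sliding $Q_1$ along $\ell$; yours is more self-contained and makes explicit \emph{why} the largest-angle hypothesis suffices (it feeds directly into the exterior-angle budget), at the cost of re-deriving geometric content that Lemma~\ref{lem:convex-projection} already encapsulates.
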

\begin{proof}
  By convexity of $P$, $D_1$ is below $D_2 A$ and so $|Q_1B|$ is at most the distance from $B$ to the intersection of $D_2 A$ with $\ell$.
  The inequality thus follows from Lemma~\ref{lem:convex-projection} applied to $D_2$.
\end{proof}

\begin{proof}[Proof of Theorem~\ref{thm:convex-polygon-refolding}]
  It suffices to show that, for $n \ge 4$, any doubly covered convex $n$-gon can be reduced to a doubly covered convex $(n-1)$-gon by an $O(1)$-step refolding,
  because then we can reduce both polygons to triangles in $O(n)$ steps,
  and \cite[Theorem~2]{Refolding_CGTA} shows there is a 3-step refolding between any pair of doubly covered triangles with the same area.

  Let $P$ be a convex $n$-gon where $n \ge 4$, and let $B$ be a vertex of $P$ with the largest interior angle.
  By Corollary~\ref{cor:convex-projection2} we can label the nearby vertices of $B$ by $A, C, D$
  such that $A, B, C, D$ are consecutive and $|QB| \le |AC|$ where $Q$ is the intersection of $DC$ with $\ell$, the line through $B$ parallel to $AC$.
  Let $P'$ be the polygon obtained from $P$ by replacing $\triangle ABC$ by a rectangle with base $AC$ of the same area,
  and let $P''$ be the polygon obtained from $P$ by replacing $\triangle ABC$ by $\triangle AQC$.
  By Lemma~\ref{lem:acute}, there is an $O(1)$-step refolding between the double covers of $P$ and $P'$;
  it applies because the interior angle of $B$ is at least $90\degree$.
  Similarly, one of Lemmas~\ref{lem:acute}~or~\ref{lem:scalene} (using $|QB| \le |AC|$) shows that there is an $O(1)$-step refolding between the double covers of $P''$ and $P'$.
  Thus there is an $O(1)$-step refolding between double covers of $P$ and $P''$.
  But because $D,C,Q$ are collinear, $P''$ is a convex $(n-1)$-vertex polygon.
\end{proof}

\section{Transformation Between Tree-Shaped Polycubes}
\label{sec:polycube-transformation}

Next we consider \defn{tree-shaped $n$-cubes}, that is,
polyhedral manifolds formed from $n$ unit cubes in 3D
\defn{joined} face-to-face in a tree structure (forming a tree dual graph).
Here, when two cubes get joined together at a common face,
we remove that face from the manifold,
preserving that the manifold is homeomorphic to a sphere.
(This notion of ``join'' is a higher-dimensional analog of gluing.)
Thus every tree-shaped $n$-cube has surface area $6n-2(n-1) = 4n+2$.

We allow two cubes to be adjacent even if they are not glued together,
in which case there are two surface squares in between.
If there are no such touching cubes, we call the tree-shaped $n$-cube
\defn{slit-free}.%
\footnote{Our definition of ``slit-free'' here is less 
  restrictive than previous notions of ``well-separated''
  \cite{damian2021unfolding},
  which required at least one straight cube
  (connected to cubes on two opposite faces and nowhere else)
  between every two non-straight cubes.}
%
When the $n$-cubes are not slit-free,
we further allow multiple cubes to occupy the same location in space,
in which case we call the tree-shaped $n$-cube \defn{self-intersecting}.

All cubes of a tree-shaped $n$-cube naturally lie on a cubical grid.
Define \defn{grid cutting} to be cutting restricted to edges of the
cubical grid, and \defn{grid refolding} to be grid cutting followed by gluing
that results in another tree-shaped $n$-cube.

\begin{theorem} \label{thm:polycube}
  Any two tree-shaped $n$-cubes have an $O(n^2)$-step grid refolding,
  where all intermediate manifolds are possibly self-intersecting
  tree-shaped $n$-cubes.
  If the given tree-shaped $n$-cubes do not self-intersect and
  are slit-free, then the intermediate manifolds do not self-intersect.
\end{theorem}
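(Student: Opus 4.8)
The plan is to reduce to a canonical target and exploit reversibility. Since every grid refolding step is a grid cut followed by a gluing, and cutting and gluing are mutually inverse operations, any $k$-step grid refolding from a manifold $\manifold A$ to a manifold $\manifold B$ reverses to a $k$-step grid refolding from $\manifold B$ to $\manifold A$. Thus it suffices to fix a canonical tree-shaped $n$-cube $\manifold C$ --- say the straight $1\times1\times n$ bar --- and show that every tree-shaped $n$-cube admits an $O(n^2)$-step grid refolding to $\manifold C$; composing a refolding of the first input into $\manifold C$ with the reverse of a refolding of the second input into $\manifold C$ then yields the desired $O(n^2)$-step refolding between the two inputs, and the surface area is automatically $4n+2$ throughout.

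The workhorse will be an \emph{elementary move} that relocates a single leaf cube to an adjacent valid grid position: either a \emph{slide} to a coplanar neighboring surface square, or a \emph{pivot} about a grid edge to turn a corner. I claim each elementary move is realizable by $O(1)$ grid refolding steps. A leaf cube appears on the surface as an "open box" of five unit squares bumped out over one square of the rest of the structure, and relocating it changes the gluing pattern only along a constant number of grid edges incident to that box and to the squares of its old and new footprints. The subtlety is that a grid cut must preserve connectivity, so I cannot simply cut the box free and reglue it elsewhere. Instead I keep one incident grid edge uncut as a \emph{hinge} --- the edge about which the cube pivots, or the shared edge of the two coplanar footprints in the case of a slide --- perform the local recut-and-reglue in one or two steps, and observe that the moving box stays attached to the rest of the (connected) structure throughout. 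Each elementary move clearly produces another tree-shaped $n$-cube, since only a leaf is repositioned and the tree dual is unchanged.

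With the elementary move in hand, the self-intersecting case is straightforward: because overlapping cubes are permitted, no collision constraints arise, and I need only maintain connectivity, which the hinge guarantees. I build $\manifold C$ greedily, repeatedly choosing a leaf of the current tree and routing it across the surface --- by a sequence of slides and pivots --- to extend the growing bar. Since the structure has $O(n)$ cubes and hence surface diameter $O(n)$, routing one cube costs $O(n)$ elementary moves, so $O(n^2)$ moves and thus $O(n^2)$ grid refolding steps suffice, with every intermediate a (possibly self-intersecting) tree-shaped $n$-cube as required.

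The main obstacle is the non-self-intersecting, well-separated case, where collisions matter and I must keep every intermediate both non-self-intersecting and well-separated. Here I will invoke the reconfiguration theory for modular cube robots: treating each unit cube as a module and each elementary move as a robot move, a collision-free, connectivity-preserving reconfiguration sequence of $O(n^2)$ moves transforms any well-separated configuration into the canonical bar. The well-separation hypothesis is exactly what supplies the geometric clearance these pivot and slide moves need, since pivoting cubes are otherwise prone to blocking. I must verify two things: that each robot move translates into a valid $O(1)$-step grid refolding as above, and that the chosen schedule keeps the configuration well-separated at every step --- inserting intermediate spreading moves if necessary --- so that the simulation never forces two unjoined cubes to touch or overlap. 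Establishing this faithful correspondence, and adapting the robot reconfiguration schedule to respect well-separation throughout, is the crux of the argument; the remaining bookkeeping and the final composition via reversibility are routine.
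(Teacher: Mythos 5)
Your overall architecture matches the paper's up to the last step: reduce to the canonical $1\times 1\times n$ bar, compose via reversibility of refolding steps, realize leaf slides and leaf rotations as $O(1)$ (in the paper, single) cut-and-glue grid refolding steps whose cuts never fully detach the moving cube, and greedily route one leaf at a time around the surface to the growing bar at $O(n)$ moves per cube, for $O(n^2)$ total. Through the self-intersecting case this is essentially the paper's proof.

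The gap is in the well-separated case. You invoke "reconfiguration theory for modular cube robots" to supply a collision-free $O(n^2)$ schedule, but the known sliding-cube algorithms --- in particular the $O(n^2)$ universal reconfiguration of Abel, Akitaya, Kominers, Korman, and Stock \cite{abel2024universal} --- move \emph{arbitrary} cubes, not just leaves of a tree dual, and their intermediate configurations need not be tree-shaped at all (only connected). Your refolding simulation, like the paper's, is defined only for leaf cubes (a slide is a leaf reparenting, a rotation reattaches a leaf to its parent), and the paper explicitly notes that simulating slides and rotations of \emph{nonleaf} cubes by refolding seems more difficult and leaves it open. So the "faithful correspondence" you defer to as the crux is not routine bookkeeping to be verified: it fails as stated for the moves those algorithms actually perform, and no adaptation is offered. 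The paper sidesteps all of this with a much lighter observation: its blind leaf-routing algorithm can only self-intersect by placing the traversing cube on top of a cube that is adjacent-but-not-joined to the surface being followed, and well-separation of the \emph{input} rules out exactly such touching cubes, so the same simple algorithm used in the self-intersecting case is automatically collision-free. No robot-reconfiguration machinery is needed, and there is no need (nor ability) to keep intermediates well-separated, which your plan demands but the theorem does not require --- indeed the moving cube necessarily touches cubes it passes, so insisting on well-separation of every intermediate would doom the approach. To repair your proof, drop the appeal to the robotics literature and argue directly that under the well-separation hypothesis your greedy leaf routing never places the moving cube in an occupied grid cell.
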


To transform between two given tree-shaped polycubes $\manifold P$ and
$\manifold Q$, we mimic the ``sliding cubes'' model of
reconfiguring modular robots made up of $n$ cubes,
which was recently solved in optimal $O(n^2)$ steps \cite{abel2024universal}.
This model defines two types of operations
(see Figure~\ref{fig:polycube-transform-intro}):
\begin{enumerate}
\item \defn{Slide} a cube along a flat surface of neighboring cubes by 1 unit.
\item \defn{Rotate} a cube around the edge of an adjacent cube.
\end{enumerate}

\begin{figure}
	\centering
	\includegraphics[scale=0.3]{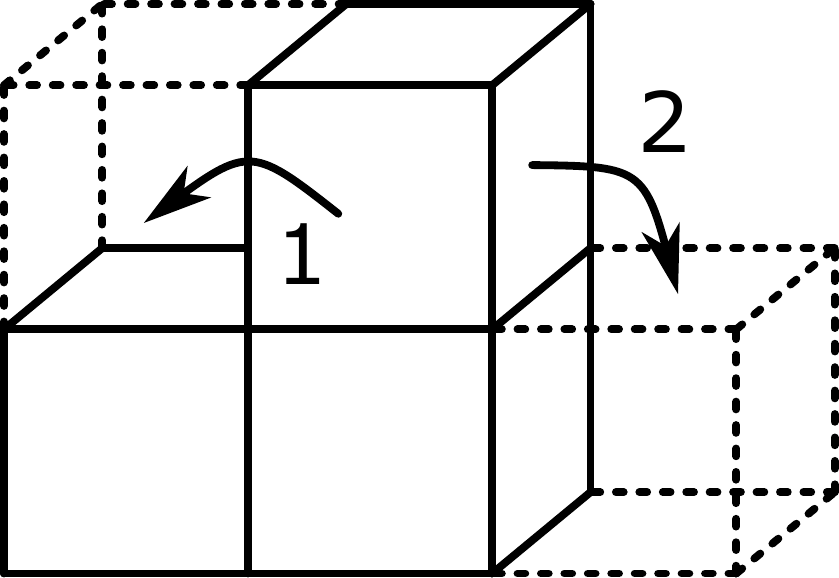}
	\caption{Two different ways an individual cube can move on a surface of a polycube: (1) sliding and (2) rotating.}
	\label{fig:polycube-transform-intro}
\end{figure}

We will show how to perform slide and rotate operations for a \defn{leaf} cube,
that is, a leaf of the dual tree in a tree-shaped polycube.
In this case, sliding can be viewed as moving a leaf cube to a new parent,
and rotating can be viewed as the leaf cube attaching to a different location
of the same parent.

To slide a leaf cube,
we perform the following refolding step,
illustrated in Figure~\ref{fig:polycube-transform-slide}:
\begin{enumerate}
\item Cut $AB$, $BE$, $ED$, $FG$, $GJ$, and $IJ$.
  These cuts free up the leaf cube to move into the adjacent location,
  as drawn in the intermediary figure in
  Figure~\ref{fig:polycube-transform-slide}.
\item Glue $AB'$ to $E'B'$, $FG'$ to $J'G'$, $E'D$ to $AB$,
  $J'I$ to $FG$, $DE$ to $BE$, and $IJ$ to $GJ$.
\end{enumerate}

\begin{figure}
	\centering
	\includegraphics[width = 0.9\linewidth]{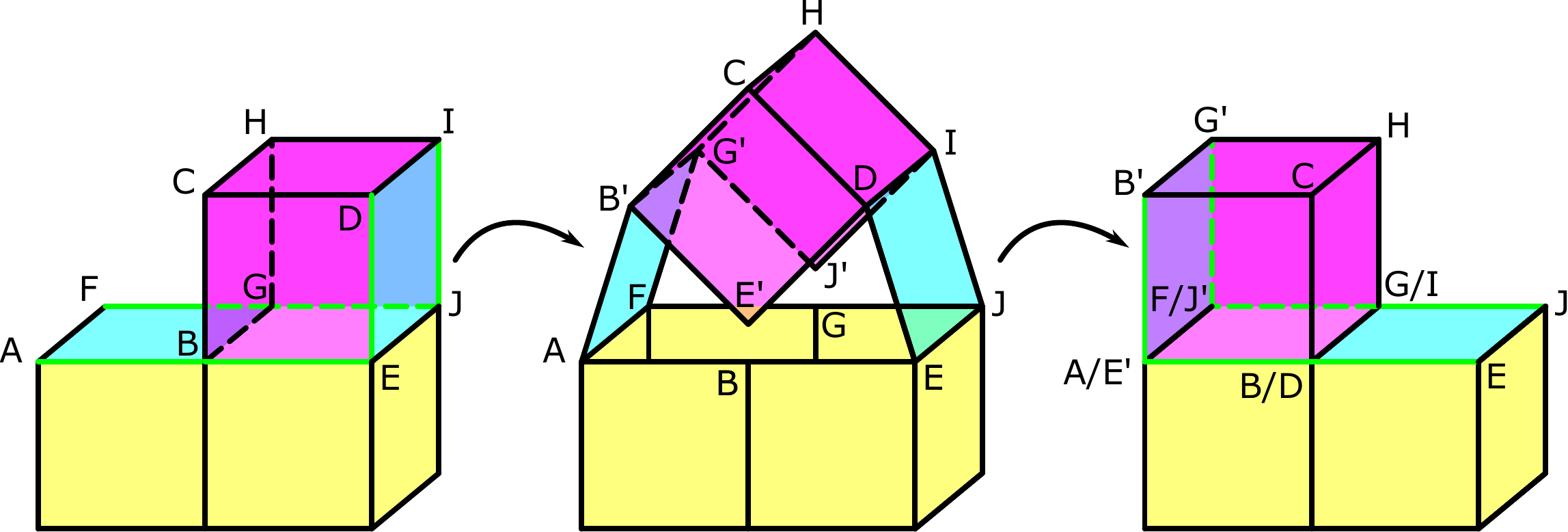}
	\caption{Sliding a leaf cube.}
	\label{fig:polycube-transform-slide}
\end{figure}

Figure~\ref{fig:polycube-transform-slide-2} shows an extension of sliding.
Here the leaf cube $IDEJCHGB$ does not move,
but it changes its parent from the cube attached below to the cube
attached on its left, effectively traversing the reflex corner.
The same refolding step as sliding applies in this case.

\begin{figure}
  \centering
  \includegraphics[width = 0.9\linewidth]{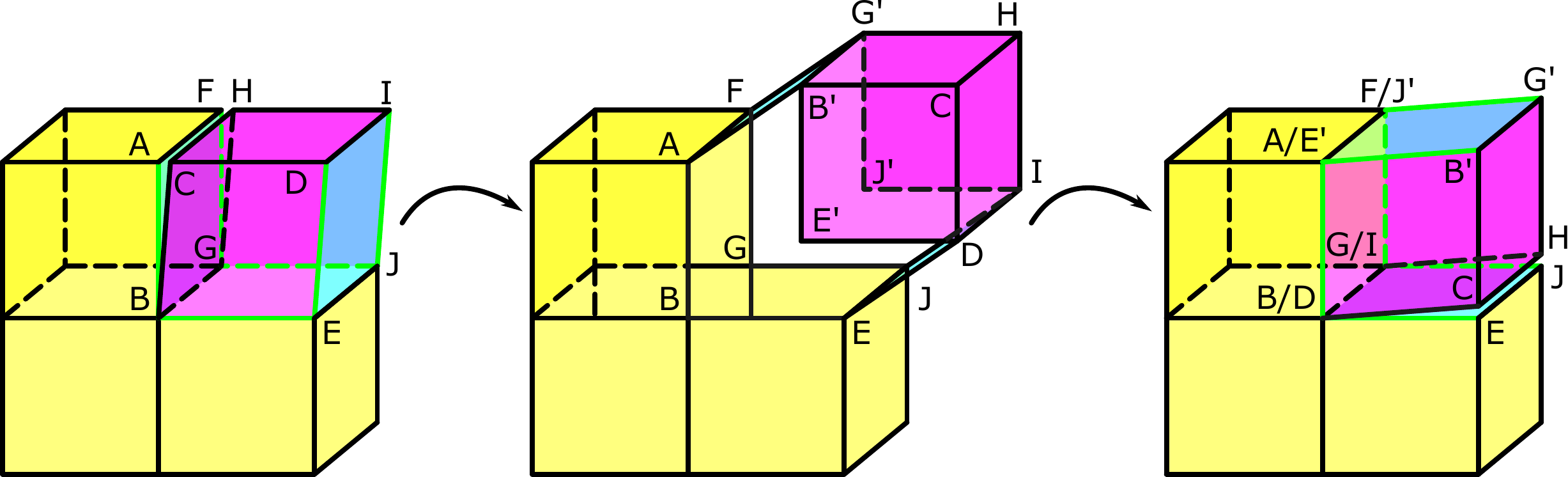}
  \caption{"Sliding" a leaf cube in a reflex corner.}
  \label{fig:polycube-transform-slide-2}
\end{figure}

To rotate a leaf cube around an edge
we perform the following refolding step,
illustrated in Figure~\ref{fig:polycube-transform-rotate}:
\begin{enumerate}
\item Cut $BA$, $AD$, $DE$, $GF$, $FI$, and $IJ$.
  Similar to the sliding procedure, these cuts free up the leaf cube to move,
  as shown in the intermediary figure in
  Figure~\ref{fig:polycube-transform-rotate}.
\item Glue $AB$ to $AD$, $FG$ to $FI$, $BA'$ to $DE$,
  $GF'$ to $IJ$, $A'D'$ to $ED'$, and $F'I'$ to $JI'$.
\end{enumerate}

\begin{figure}
  \centering
  \includegraphics[width = 0.9\linewidth]{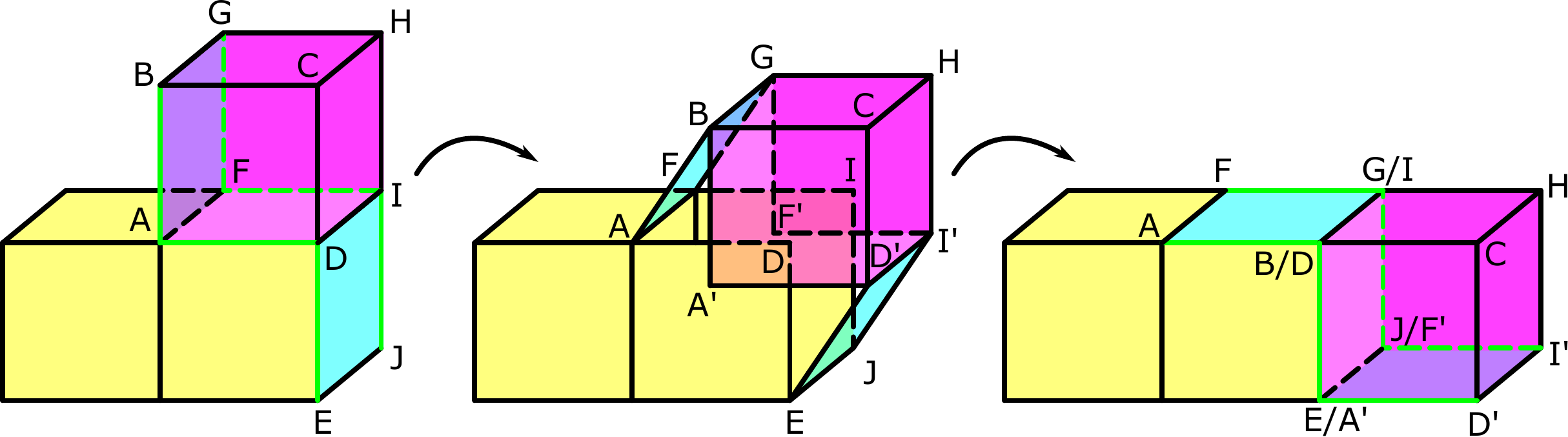}
  \caption{Rotating a leaf cube over the edge of a polycube.}
  \label{fig:polycube-transform-rotate}
\end{figure}







By combining these three operations, we can follow a simple algorithm for
transforming a given $n$-cube $\manifold P$ into a $1 \times 1 \times n$ line:
\begin{enumerate}
\item Fix one leaf cube as the \defn{root} cube~$c_1$.
  Assume by symmetry that the root cube's unique neighbor
  is in the down direction.
\item For $i = 2, 3, \dots, n$:
  \begin{enumerate}
  \item Assume $c_1, \dots, c_{i-1}$ have been arranged into an upward line,
    with $c_i$ being a current leaf.
  \item Take a leaf cube $c_i$ that is not $c_{i-1}$
    (given that there are always at least two leaves).
  \item Slide and rotate $c_i$ around the boundary of the rest of the tree
    until it reaches the root cube $c_1$, and then slide it up the line
    to place it immediately above $c_{i-1}$.
  \end{enumerate}
\end{enumerate}

This algorithm requires $O(n^2)$ steps: potentially each of the $n$ cubes
needs to traverse the surface area of the tree-shaped $n$-cube, which is $O(n)$.
It may also cause self-intersection, because it blindly follows the surface of
the tree-shaped $n$-cube, so it may place the moving leaf cube on top of
an adjacent cube in the case of touching cubes.
If the tree-shaped $n$-cube is slit-free, though,
then this simple algorithm avoids self-intersection.

To transform between two tree-shaped $n$-cubes $\manifold P$ and $\manifold Q$,
we apply the algorithm above separately
to each of $\manifold P$ and $\manifold Q$,
perform the refolding steps on $\manifold P$ to transform it into a line,
and then perform the reverse refolding steps on $\manifold Q$ to transform
the line into $\manifold Q$.  (Note that each refolding step is reversible.)
Thus we have proved Theorem~\ref{thm:polycube}.

It is tempting to apply the (much more complicated)
$O(n^2)$-step algorithm of Abel, Akitaya, Kominers, Korman, and Stock
\cite{abel2024universal}, which has the advantage of avoiding self-intersection
without any assumption of slit-freeness.
Unfortunately, sliding and rotating nonleaf cubes seem more difficult.
One approach is to transform one spanning tree into another
(probably increasing the number of steps),
but it is not even clear whether this can be accomplished by
leaf reparenting operations.

It also seems likely that some of these moves can be done in parallel
in the same refolding step, leading to fewer refolding steps.
Some models of modular robotics have parallel reconfiguration algorithms
that move a linear number of robots in each round
\cite{aloupis2008reconfiguration,aloupis2009linear}.
It remains open whether we can get similarly good bounds
in the cube sliding model or the leaf-focused sliding-by-refolding model.

\section{Conclusion}

In this paper, we showed a transformation algorithm between any two manifolds with two cut-and-glue refolding steps. When transforming between manifold~$\manifold{P}$ and manifold~$\manifold{Q}$, we go through an intermediate embeddable polyhedron which is not necessarily convex. We also showed two simpler refolding algorithms for doubly covered polyhedra and tree-shaped polycubes.

Many open questions remain:
\begin{itemize}
\item Are there examples where 1-step refolding is impossible? \cite{CommonUnfolding_CCCG2022}
\item If the two given polyhedra are convex, is there a finite-step refolding where the intermediate polyhedra are also convex? \cite[Section 25.8.3]{Demaine-O'Rourke-2007}
\item Can we extend our polycube result to avoid self-intersection without assuming slit-freeness, or to support non-tree-shaped polycubes of the same surface area?
\item Can we improve the number of refolding steps needed for the doubly covered polygon or polycube refolding algorithms?
\end{itemize}

\section*{Acknowledgments}

This work was initiated during an MIT class on Geometric Folding Algorithms
(6.849, Fall 2020).  We thank the other participants of that class ---
in particular, Josh Brunner, Hayashi Layers, Rebecca Lin, and Naveen Venkat
--- for helpful discussions and providing a productive research environment.
We thank the anonymous reviewers and Joseph O'Rourke for helpful comments
on the paper, in particular for catching a bug in an earlier version of
our manifold construction.

\bibliography{origami.bib}
\bibliographystyle{alpha}

\end{document}